\def\namedlabel#1#2{\begingroup
    #2%
    \def\@currentlabel{#2}%
    \phantomsection\label{#1}\endgroup
}
\theoremstyle{plain}
\newtheorem{theorem}{Theorem}[section]
\theoremstyle{remark}
\newtheorem{remark}{Remark}
\begin{document}

\begin{frontmatter}
\title{Review of Quasi-Randomization Approaches for Estimation from Non-probability Samples}
\runtitle{Review of Quasi-Randomization Approaches}

\begin{aug}
\author[A]{\fnms{Vladislav}~\snm{Beresovsky} 
\ead[label=e1]{beresovsky.vladislav@bls.gov}},
\author[B]{\fnms{Julie}~\snm{Gershunskaya}\ead[label=e2]{gershunskaya.julie@bls.gov}}
\and
\author[C]{\fnms{Terrance D.}~\snm{Savitsky}\ead[label=e3]{savitsky.terrance@bls.gov} 
}
\runauthor{Beresovsky et al.}


\address[A]{Vladislav Beresovsky is Research Mathematical Statistician,
U.S. Bureau of Labor Statistics\printead[presep={\ }]{e1}.}

\address[B]{Julie Gershunskaya is Mathematical Statistician, 
U.S. Bureau of Labor Statistics\printead[presep={\ }]{e2}.}

\address[C]{Terrance D. Savitsky is Senior Research Mathematical Statistician, U.S. Bureau of Labor Statistics\printead[presep={\ }]{e3}.
}
\end{aug}

\begin{abstract}
The recent proliferation of computers and the internet have opened new opportunities for collecting and processing data. However, such data are often obtained without a well-planned probability survey design. Such non-probability based samples cannot be automatically regarded as representative of the population of interest. Several classes of  
methods for estimation and inferences from non-probability samples have been developed in recent years. 
The quasi-randomization methods assume that non-probability sample selection is governed by an underlying latent random mechanism. The basic idea is to use information collected from a probability (“reference”) sample to uncover latent non-probability survey participation probabilities (also known as “propensity scores”) and use them in estimation of target finite population parameters. In this paper, we review and compare theoretical properties of recently developed methods of estimation survey participation probabilities and study their relative performances in simulations.
\end{abstract}

\begin{keyword}
\kwd{Data combining}
\kwd{non-probability sample}
\kwd{reference sample}
\kwd{participation probabilities}
\kwd{sample likelihood}
\kwd{variance estimation}
\end{keyword}
\end{frontmatter}

\section{Introduction} \label{sec:intro}



Over the late 19th and beginning of 20th century, survey statisticians discussed the appropriateness of using partial enumeration, or samples, for studying populations. The debates aroused after Anders Nicolai Kiær, the founder and Director of the Norwegian Bureau of Statistics, presented his method of sampling at the 1895 meeting of the International Statistical Institute. Kiær called the method ``representative'', since units were carefully (non-randomly) selected from the population, based on a number of important demographic characteristics. His peers debated whether such samples indeed represented the population of interest and what criteria would be appropriate to gauge the representativeness of a sample (\citet{Kruskal1980}). The debates settled down in 1934, when Jerzy Neyman published a seminal paper in which he criticised the purposive, non-probability based, way of sampling and introduced his sampling philosophy with the central idea of the randomization based sample selection process as the essential prerequisite to ensure representativeness of a sample, see \citet{neyman1934}. Since then, \emph{randomization} became a widely accepted way for obtaining representative samples to estimate target population parameters.


Recent years have seen a renewed and growing interest in \emph{non-randomized} samples to augment survey estimation due to an ongoing reduction in survey response rates and the rise of technology that connects people on-line for building data repositories. 
Although it is understood that only a randomization based approach provides guarantees for the representativeness of a sample, traditional well-planned probability based surveys may not be easy to obtain. A relatively expensive and time-consuming data collection process for probability surveys is highly labor intensive, even more so as agencies attempt to partially remedy non-response.
There has been a long-term trend towards reductions in response rates for nearly all randomized survey instruments administered by government statistical agencies \citep{10.1093/jssam/smx019}. 
In contrast, the recent proliferation of computers and the internet has created new ways for acquiring data.  A lot of information now can be collected seamlessly and is often available as a simple by-product of other business activities. Such data sometimes are called ``convenience'' samples. The demand for exploiting these resources is steadily growing. The downside of such ``opportunistic'' collection of information is that, no matter how big the samples are, data obtained without a well-planned probability survey design cannot be automatically regarded as being representative of the population of interest. \citet{2010Bethlehem}, \citet{2018Meng}, \citet{2011VanderWeele_Shpitser}, among others, warn that naïve estimation from non-probability based sources may lead to substantial biases.
The current interest to use non-randomized data is not to replace the randomized survey samples, but rather to leverage the randomized survey samples to remove potential estimation bias from these, typically large-sized, non-randomized data sources and improve estimation quality (through reducing estimator variances).

The fundamental ideas for making inferences from non-probability samples are parallel to those developed within observational studies and missing data literature (\citet{Rubin1976}, \citet{RosenbaumRubin1983}) and are based on modeling assumptions that seek to account for the sample participation mechanism to reduce estimation bias.  Existing methodology for estimation from non-probability samples 
can be broadly grouped into model-based, quasi-randomization, and doubly-robust approaches. Model-based approaches are based on a model for the target variable of interest, whereas quasi-randomization approaches focus on estimation of the propensity to participate in the non-probability survey; doubly-robust estimators combine these two types of models.   \citet{2017elliot}, \citet{2020Valliant},  \citet{2021Beaumont_Rao}, \citet{2020Rao}, \citet{2022Wu} provide a comprehensive account of this rapidly developing field. 

In the current paper, we focus on a set of quasi-randomization approaches. 
These methods assume that non-probability sample selection is governed by an underlying latent random mechanism. The basic idea is to use information collected from a probability based sample, drawn independently from the same target population, to uncover latent non-probability survey participation probabilities and use them in the estimation of target finite population parameters. In this context, the required probability based sample is often regarded as ``reference'' sample. 

The important requirement is that both the non-probability and reference samples contain a common set of covariates and the assumption is that the non-probability sample participation probabilities are governed by these covariates and could be estimated from a model. \citet{2020Valliant} also notes the importance of  the \emph{common support} requirement, which means that the target population units are assumed to have positive probabilities to be included into both samples conditional on the shared set of covariates. For a broad range of typically used covariates, such as large demographic groups, there is usually enough overlap in the characteristics of units, so that the common support requirement is satisfied for such covariates. It becomes less tenable, however, for more detailed classifications and for covariates identifying smaller population groups. In this case, we are faced with a potentially insufficient representation of some population groups in either of the two samples. Thus, it is important to develop an efficient and robust methodology that would lead to better utilization of available detailed covariates for more accurate estimation of participation probabilities. 

We review several methods for estimation of survey participation probabilities and discuss our likelihood-based approach that is more efficient than pseudo-likelihood based approaches for practically relevant scenarios. This paper can be considered as a sequel to \citet{savitsky2023} who used a Bayesian framework for the modeling. In the current paper, we derive asymptotic results and compare theoretical properties of the discussed methods, and explore their relative performances using a numerical study and simulations. 

{\citet{2020_ChenLiWu} proposed to directly model the unknown non-probability sample participation probabilities in a likelihood defined on the target population. Since the target population is not observed, they employ a pseudo-likelihood approach using the probability reference sampling weights. The role of the reference sample is limited to estimation of the log-likelihood term containing the  sum over the target population. 
The method has a benefit of being optimal when the reference sample coincides with the target population. However, \citet{2021valliant} noted that this method can be relatively inefficient when the reference sample is small and the  non-probability sample is large. This scenario is practically important when making estimates from large administrative data. 
}

{Another group of methods consider combined non-probability and probability samples and define indicator variables $I_z$ that take the value of 1 for  the non-probability sample units, and 0 for the reference sample units. Since indicator $I_z$ is observed, it is straightforward to model probability $\pi_z = P\{I_z=1\}$ on the combined samples set. Then, non-probability sample participation probabilities are obtained from a relationship between $\pi_z$ and the respective samples inclusion probabilities. 
}
{ \citet{2009elliot} and \citet{2017elliot} derived this relationship under the assumption that the samples inclusion probabilities are small, so that the  overlap between probability and non-probability samples is negligible. 
}

{
\citet{2021valliant} proposed the Adjusted Logistic Propensity (ALP) weighting method that is based on an imaginary construct stacking the non-probability sample and the target population. They formulated the Bernoulli likelihood on this set for indicator variable $I_z$ taking the value of 1 on
the non-probability sample part of the stack, and 0 on the target population. 
Probability $\pi_z=P\{I_z=1\}$ can be estimated from  
the respective pseudo-likelihood by using weighted logistic regression implemented in standard statistical software.
Then, the response propensity to the non-probability survey can be inferred from a formula linking the response propensity to the probability of being in the non-probability sample part of the stack.
}

{
The cited stack-based methods estimate response propensity in two steps. First, the probability $\pi_z$ is estimated on the stacked structure. Second, propensity to participate in the non-probability sample is derived from the relationship between $\pi_z$ and the respective samples inclusion probabilities. Such a two-step procedure does not guarantee optimal estimation of participation probabilities and does not necessarily produce estimates below 1. 
}

{\citet{savitsky2023} propose estimating participation probabilities from the stacked non-probability and reference samples setup. 
For completeness, we repeat the description of the method and the proof of the key formula in the current paper.
Similar to \citet{2009elliot} and \citet{2017elliot}, indicator variables $I_z$ are defined so that they take the value of 1 for units coming from the non-probability sample part of the stack, and 0 for units coming from the reference sample. Yet we note that if there are overlapping units, they would be included in the stack \emph{twice}, with indicator values of 1 and 0. 
The key difference between the two methods is that the new derivation of the relationship between $\pi_z$ and sample inclusion probabilities does not require the assumption of a negligible overlap between the two samples: under our stacked samples setup, the relationship holds \emph{exactly} for any degree of the overlap. For easy reference, we name the formula the Core Relationship for Independent Sampling Probabilities (CRISP), due to the central role it plays in the estimation of the participation probabilities.

There is a common perception in the literature that under the stacked samples approach the overlapping units must be removed (for instance, in the case of making estimates from a large administrative dataset treated as a non-probability sample).  The problem is that, in practice, we often do not know which units belong to both samples. \citet{2022LiuScholtusWaal} proposed to identify the common units via probabilistic record linkage under the assumption that it can be done with high probability, and \citet{2021KimTam} used a nonparametric classification method.
We provide the proof that, under usual assumptions, indicator variables $I_z$ on the stacked set are practically independent conditional on covariates, even when the probability of the overlap is large. This allows us to use the likelihood for independent Bernoulli variables formulated over all units in the stacked sample and avoid the need to identify and remove the overlapping units.

 Another important innovation is in that we use CRISP to parameterize the probabilities in the logistic regression model by treating them as a composite function, so that the estimates of the non-probability sample participation probabilities could be obtained in one step directly from the model fit. \citet{beresovsky2019} used this way of estimation, called Implicit Logistic Regression (ILR). The ILR approach is an improvement over the two-step procedure of \citet{2009elliot} and \citet{2017elliot} as it provides better efficiency and avoids cases of estimated propensities exceeding 1. Analogously, we present a version of the ILR method for the setup and pseudo-likelihood approach of \citet{2021valliant}. This method, called pseudo-ILR (PILR), is an alternative to the ALP that estimates participation probabilities in a one-step procedure.   
}

In Section \ref{sec:setup}, we introduce the setup and notation for the modeling of convenience sample participation probabilities. In Section~\ref{sec:CLW_ALP}, we review the method of \citet{2020_ChenLiWu} that directly models
the unknown convenience sample inclusion probabilities, $\pi_{c}$, in a likelihood defined on the population.  Section~\ref{sec:basicformula} introduces an alternative class of methods that \emph{indirectly} models
$\pi_{c}$ in their likelihood statements through composing a functional relationship between probabilities of inclusion
from the population to probabilities of being in a given part of
the observed ``stacked'' convenience and reference samples. We state the CRISP formula 
and provide its proof that utilizes conditional independence of inclusions in the convenience and reference samples. 
In Section \ref{sec:ILR_PILR}, we describe methods for modeling convenience sample participation probabilities using this relationship.
Theoretical properties of the estimators of model parameters and respective inverse propensity weighted (IPW) estimators of the population mean are given in Section \ref{sec:theorems} for the $3$ methods on which we focus.  Section \ref{sec:sims} presents a simulation study that compares methods under various scenarios that include varying reference and convenience sample sizes and sampling fractions, as well as different degrees of the common support. We provide concluding remarks in Section \ref{sec:conclusion}.


\section{Estimation of convenience sample participation probabilities} \label{sec:models}
\subsection{Setup and notation} \label{sec:setup}

{In this paper we focus on estimation of the finite population mean ${\mu} ={N^{-1}} \mathop{\sum}_{i \in U}y_i$, for units $i = 1,\ldots,N$ in target population $U$, in the situation where variables of interest $y_i$ are observed only on the non-probability (convenience) sample $S_c$ of size $n_c$. In the sequel we often use the term ``convenience'' sample to generally refer to a non-probability sample. We assume that the observed convenience sample is a realization of a latent (unknown) random sampling design and define \emph{unknown} probability $\pi_{ci}=\pi_{c}(\mathbf{x}_i)=P\{I_{ci}=1|i \in U, \mathbf{x}_i\}$ for units in population $U$ to be included into the convenience sample.  Here, $I_{ci}$ is the inclusion indicator of population unit $i$, taking on the value of 1 if unit $i$ is included into $S_c$, and $0$ if it is in $U\backslash S_{c}$; $\mathbf{x}_i$ is a set of known covariates.
We refer to $\pi_{ci}$ as a ``participation'' probability (as differentiated from an inclusion probability) to recognize a typical realization of a convenience sample by self-selection, rather than solicitation.}

A probability (reference) sample $S_r$ of size $n_r$ is also assumed to be available. It is selected from the same finite population $U$ under a known probability survey design with inclusion probabilities $\pi_{ri}=P\{I_{ri}=1 \mid i \in U, \mathbf{x}^{'}_i\}$, where $I_{ri}$ is the reference sample inclusion indicator of unit $i$. 
Suppose samples $S_c$ and $S_r$ share a common set of covariates $\mathbf{x}_i = \mathbf{x}^{'}_i$. 
\\

{Similar to \citet{2020_ChenLiWu}, we assume the following about the selection process:}
\begin{enumerate}
\item [\namedlabel{itm:commsupp}{A1}:] 
 Each unit in the population has a positive participation probability: 
 $\pi_{ci}>0$ for all $i \in U.$ 
 \item [{A2}:] Sample $S_c$ selection mechanism is ignorable given $\mathbf{x}_i$:  $P\{I_{ci}=1| i \in U, \mathbf{x}_i, y_i\}=P\{I_{ci}=1|i \in U, \mathbf{x}_i\}$  for all $i \in U$. 
 \item[\namedlabel{itm:indepI_c} {A3}:] Indicators $I_{ci}$ and $I_{cj}$ are independent given $\mathbf{x}_i$ and $\mathbf{x}_j$ for $i \neq j.$
 \item[\namedlabel{itm:indepI_cI_r} {A4}:] Indicators $I_{ci}$ and $I_{ri}$ are independent given $\mathbf{x}_i$ for any $i \in U.$
\end{enumerate}

When the randomized sampling design and associated inclusion probabilties are known, 
a condition like Assumption ~\ref{itm:commsupp} ensures that the 
resulting design distribution (that governs the taking of repeated samples from the target population) will produce a collection of samples that together represent all population subgroups.  In our setting where the $\pi_{c}$ are estimated, Assumption~\ref{itm:commsupp} may be interpreted as a ``common support'' assumption that requires a complete overlap of the support of predictor values (or the distributions of predictor values) between the convenience and reference samples.   
See also discussion on a common support requirement in \citet{2020Valliant}.

Quasi-randomization approaches propose methods for estimation of non-probability sample participation probabilities $\pi_{ci}$ by combining non-probability and probability samples. These estimated probabilities are then used in the IPW estimator of the finite population mean.

\subsection{Direct estimation of participation probabilities \label{sec:CLW_ALP} }

We review the method of \citet{2020_ChenLiWu} for combining information from both samples to estimate the non-probability / convenience sample participation probabilities. Their method is in the class of pseudo-likelihood approaches where a log-likelihood is formulated over the population units, as if the whole finite population was observed. Next, since only a probability sample is observed, the log-likelihood is replaced by a pseudo log-likelihood formulation, using the probability sample weights from the observed reference sample. Model parameters are obtained as a solution to respective pseudo-likelihood based estimating equations. Finally, the estimated non-probability sample participation probabilities are used to form the IPW estimator of the finite population mean.    

\citet{2020_ChenLiWu} (hereafter CLW) write a log-likelihood over units in $U$, with respect to Bernoulli variable $I_{ci}$:      
\begin{eqnarray}
   && \ell^{CLW}(\boldsymbol{\beta}) = \label{eq:chen1}    \\ 
    &  & \mathop{\sum}_{i \in U}\left\{I_{ci}\log\left[{\pi_{ci}(\boldsymbol{\beta})}\right] +  (1-I_{ci})\log\left[1-\pi_{ci}\left(\boldsymbol{\beta}\right)\right]\right\}=  \nonumber \\ 
    &  & \mathop{\sum}_{i \in S_{c}}\log\left[\frac{\pi_{ci}(\boldsymbol{\beta})}{1-\pi_{ci}\left(\boldsymbol{\beta}\right)}\right] + \mathop{\sum}_{i \in U}\log\left[1-\pi_{ci}\left(\boldsymbol{\beta}\right)\right] \label{eq:chen2},
\end{eqnarray}
where $\boldsymbol{\beta}$ is the parameter vector in a logistic regression model  $\text{logit}\left[\pi_{ci}(\boldsymbol{\beta})\right]=\boldsymbol{\beta^T}\mathbf{x}_i$.  

We refer to this method as ``direct'' estimation because $\pi_{ci}$ is directly used in the Bernoulli likelihood  \eqref{eq:chen1} formulated for the \emph{unobserved} non-probabilty sample indicators $I_{ci}$.

Since population $U$ is not available, CLW employ a pseudo-likelihood approach by, first, regrouping units, as in (\ref{eq:chen2}), and then replacing the sum over the finite population with its probability sample based estimate:
\begin{eqnarray}\label{eq:chen3}
    && \hat{\ell}^{CLW}(\boldsymbol{\beta}) =\\
    && \mathop{\sum}_{i \in S_{c}}{\log\left[\frac{\pi_{ci}(\boldsymbol{\beta})}{1-\pi_{ci}\left(\boldsymbol{\beta}\right)}\right]} +    \mathop{\sum}_{i \in S_r}{w_{ri}\log\left[1-\pi_{ci}\left(\boldsymbol{\beta}\right)\right]},  \nonumber
\end{eqnarray}
where weights $w_{ri}=\pi_{ri}^{-1}$ are inverse values  of the reference sample inclusion probabilities $\pi_{ri}$. Estimates are obtained by solving respective pseudo-likelihood based estimating equations. 

Indicator $I_{ci}$ is typically not observed because the finite population is not generally available; in particular, one does not know which units from the finite population are selected into the convenience sample.  In the sequel we define an indicator for whether a unit from a pooled reference and convenience sample is included in the convenience sample, which is observed.  

We must tie the associated probability of membership in the pooled sample to the convenience sample participation probability that we seek to estimate using a model likelihood.  
In the next Section, we present such an expression.

\subsection{Indirect estimation of participation probabilities} \label{sec:basicformula}
We next derive a formula relating the probability of membership in the pooled sample to the convenience sample participation probability that we seek to estimate using a model likelihood.  This formula is used by the two methods introduced in Section \ref{sec:ILR_PILR} to estimate the latent convenience sample participation probabilities. The formula, named Core Relationship for Independent Sampling Probabilities (CRISP), is based on the assumption that, after conditioning on covariates, inclusions into the non-probability and reference samples are independent (see Assumption~\ref{itm:indepI_cI_r} in Section \ref{sec:setup}). 

Let us consider stacking together samples $S_c$ and $S_r$ and denote the resulting set by $S$. We define indicator variable $I_{zi}$ on such a combined sample $S$ as follows: $I_{zi}=1$ if unit $i$ belongs to the non-probability sample (part 1), and $I_{zi}=0$ if unit $i$ belongs to the probability sample (part 2). Under this construction, if there is an overlap between the two samples, $S_c$ and $S_r$, then the overlapping units are included into the stacked set, $S$, twice: as part of the non-probability sample (with $I_{zi}=1$) and as part of the reference probability sample (with $I_{zi}=0$). \citet{savitsky2023} use first principles to prove the relationship between probabilities $\pi_{zi}=P\{I_{zi}=1|i \in S, \mathbf{x}_i\}$ of being in part 1 of the stacked set, on the one hand, and inclusion probabilities, $\pi_{ci}$ and $\pi_{ri}$, on the other hand. For completeness, we reproduce this result in Theorem \ref{theo:2_1} of this Section. 
Furthermore, in Theorem \ref{theo:2_2} below, we show that, regardless of the overlap, under certain usual assumptions, indicators $I_{zi}$ and $I_{zj}$ are nearly conditionally independent for any two units $i$ and $j$ in  stacked set $S$.       

We note that the stacked samples setup should be contrasted with the combined sample approach of \citet{2009elliot} and \citet{2017elliot}, where they use the assumption that the overlap is negligible between the samples.  The stacked samples setup allows for an overlap of any size -- from no overlap to a situation where one sample is a subset of another; moreover, there is no need to know the size of the overlap or the identities of the units that belong to both samples. 

\begin{theorem}[CRISP] \label{theo:2_1}
Assuming samples $S_c$ and $S_r$ are independently selected from population U, the following relationship holds:
\begin{align} \label{eq:pi_z}
{\pi_{zi}}=\frac{{\pi_{ci}}}{{\pi_{ci}}+{\pi_{ri}}}.
\end{align}
\end{theorem}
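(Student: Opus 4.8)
The plan is to compute $\pi_{zi}=P\{I_{zi}=1\mid i\in S,\mathbf{x}_i\}$ directly from the definition of conditional probability, being careful to account correctly for how an overlapping unit enters the stacked set $S$. First I would fix a population unit $i$ and describe its possible contributions to $S$: it contributes a part-1 row (with label $I_{zi}=1$) exactly when $I_{ci}=1$, and it contributes a separate part-2 row (with label $I_{zi}=0$) exactly when $I_{ri}=1$. The decisive structural feature is that when $i$ lies in both samples these are two distinct rows of $S$, so the two membership events must be treated as separate, non-overlapping contributions rather than collapsed into a single ``unit $i$ is in $S$'' event.

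Second, I would express $\pi_{zi}$ as a ratio of expected row counts (equivalently, a Horvitz--Thompson-type argument over the rows of the stack). Writing $E[\cdot\mid\mathbf{x}_i]$ for expectation under the joint selection mechanism, the expected number of part-1 rows attributable to $i$ is $E[I_{ci}\mid\mathbf{x}_i]=\pi_{ci}$, the expected number of part-2 rows is $E[I_{ri}\mid\mathbf{x}_i]=\pi_{ri}$, and by linearity the expected total number of rows is $\pi_{ci}+\pi_{ri}$. The probability that a row of $S$ with covariate profile $\mathbf{x}_i$ carries the label $1$ is then the ratio $\pi_{ci}/(\pi_{ci}+\pi_{ri})$, which is the claimed identity \eqref{eq:pi_z}. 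Here Assumption~\ref{itm:indepI_cI_r}, that $I_{ci}$ and $I_{ri}$ are conditionally independent, is what lets me treat the part-1 and part-2 rows as generated by two separate, non-interfering Bernoulli mechanisms, so that the configuration probabilities factor cleanly and the ``both-sampled'' case contributes to each count in the expected way.

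I expect the main obstacle to be justifying that the denominator is the \emph{sum} $\pi_{ci}+\pi_{ri}$ and not the union probability $\pi_{ci}+\pi_{ri}-\pi_{ci}\pi_{ri}$. A naive computation that conditions on the single event $\{i\in S\}=\{I_{ci}=1\}\cup\{I_{ri}=1\}$ and then averages the label over a uniformly chosen row produces the union in the denominator (together with an extra factor in the overlap term), and this does \emph{not} reduce to \eqref{eq:pi_z}. The resolution, and the point I would argue most carefully, is that the duplicate construction counts an overlapping unit twice, so the self-consistent normalization is by expected row count rather than by the probability that $i$ appears at all. This is precisely what makes CRISP hold \emph{exactly} for any degree of overlap, in contrast to the negligible-overlap approximation of \citet{2009elliot} and \citet{2017elliot}; once the counting is set up correctly, the remaining algebra is immediate.
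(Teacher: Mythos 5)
Your argument is correct and is essentially the paper's own proof in different clothing: the ratio of expected row counts $\pi_{ci}/(\pi_{ci}+\pi_{ri})$ is exactly what the paper's stacked ``doubled population'' $2U$ computes, where each population unit contributes one slot to each copy of $U$, the two membership events become mutually exclusive, and the common factor $P\{i\in U\mid i\in 2U\}=1/2$ cancels in the ratio \eqref{eq:pi_z}. Your discussion of why the denominator is the sum rather than the union probability $\pi_{ci}+\pi_{ri}-\pi_{ci}\pi_{ri}$ is the same point the paper makes in its first Remark following the theorem, so there is nothing substantively different to compare.
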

\begin{proof}
Since the selection into respective samples is performed independently, the combined set $S$  can be viewed as emerging from the following imaginary stacked populations scheme displayed in Figure \ref{fig:scheme}. 
We consider two copies of target population $U$. We \emph{stack} the two copies of $U$ together and denote the result by $2U$: $2U=U+U$. Sample $S_c$ is drawn from one copy of the population and $S_r$ is drawn from another copy. We next use the scheme of Figure \ref{fig:scheme} in our proof.

For such a setup, by the Law of Total Probability (LTP), we have:
\begin{eqnarray}\label{eq:S_c}
   && P\left\{ i\in S_c|i \in 2U \right\} \\
   && =  P\left\{ i\in S_c|i \in U \right\}P\{i \in U \mid i \in 2U\} =\frac{1}{2}\pi_{ci}. \nonumber
\end{eqnarray}


Similarly,
\begin{eqnarray}\label{eq:S_r}
&& P\left\{ i\in S_r|i \in 2U \right\} \\ 
&& =P\left\{ i\in S_r|i \in U \right\}P\{i \in U \mid i \in 2U\}=\frac{1}{2}\pi_{ri}. \nonumber
\end{eqnarray}

The total probability of being included into the pooled sample is simply the sum of probabilities for the non-overlapping sets: 
\begin{eqnarray}\label{eq:S}
&& P\left\{ i\in S|i \in 2U \right\} \\
&& = P\left\{ i\in S_c|i \in 2U \right\}  + P\left\{ i\in S_r|i \in 2U \right\} \nonumber \\
&&=\frac{1}{2}\pi_{ci}+\frac{1}{2}\pi_{ri}. \nonumber
\end{eqnarray}

Finally, by the definition of conditional probability, 
\begin{align}\label{eq:S_c|S}
P\left\{ i\in S_c|i\in S,i \in 2U \right\} &= \frac{P\left\{ i\in S_c|i \in 2U \right\}}{P\left\{ i\in S|i \in 2U \right\}}.
\end{align}
Formula (\ref{eq:pi_z}) directly follows from putting (\ref{eq:S_c}) and (\ref{eq:S}) in formula (\ref{eq:S_c|S}). \\
\begin{figure}
\centering
 \includegraphics[width=0.7\linewidth]{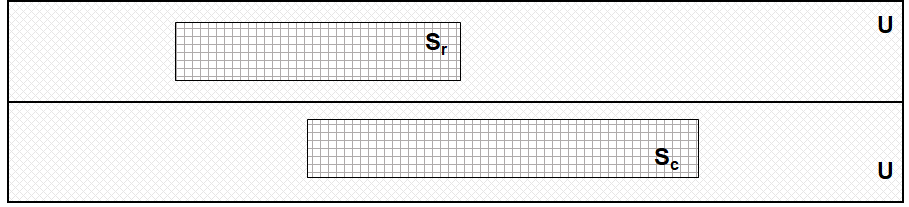}
 \caption{In this setup, two identical copies of target population $U$ are stacked together, so that $2U=U+U$.  Gridded area represents observed convenience $S_c$ and reference $S_r$ samples. Convenience sample $S_c$ is selected from one copy of population $U$ and reference sample $S_r$ is selected from another copy of $U$. Stacked together, the samples form combined sample $S$: $S=S_c+S_r$; under this scheme, if samples $S_c$ and $S_r$ overlap, the overlapping units are included in $S$ twice.}
  \label{fig:scheme}
\end{figure}
\end{proof}


\begin{remark}
In the approach of \citet{2009elliot} and \citet{2017elliot} expression (\ref{eq:pi_z}) 
is an \emph{approximation} under the assumption of a negligible overlap between samples $S_c$ and $S_r.$ Note, that unless sampling inclusion probabilities are small, the probability of a unit being included in both samples is not trivial and should be accounted for. Under our setup, this expression \emph{exactly} holds without assuming a small overlap. 
\end{remark}

\begin{remark}
The stacking devise given in Figure \ref{fig:scheme} is useful not only as a ``trick'' for the derivation of formula (\ref{eq:pi_z}). In usual practical situations, samples come from different sources  and we \emph{do not observe} which units belong to both the reference and non-probability samples and thus we would simply stack the two samples together without matching. Accounting for the probability of the overlap would contradict with such a setup.  However, if information on the identity of the overlapping units is available to researches, then it would be advisable to use this additional information and consider the union, $S_c\cup S_r,$ of the two samples instead of the stacked sample setup. 

{If indicator $I_z$ is defined on the union of two samples as having the value of 1 for units belonging to $S_c$ and 0 for the rest of the union, $S_r \setminus S_c$, the corresponding formula for  $\pi_{zi}$ would take the form:}

\begin{equation}\label{eq:identityILRwithO}
\pi_{zi}= \frac{\pi_{ci}}{\pi_{ci}+\pi_{ri}-\pi_{ci}\pi_{ri}}. 
\end{equation}

{ Such a definition of $I_z$ would correspond to obtaining the union $S_c\cup S_r$ from the stacked samples by removing the overlapping units from the reference sample. If we decide to remove the overlapping units from the non-probability sample and re-define $I_z$ as taking the value of 1 for $S_c \setminus S_r$, and 0 for $S_r$, the formula for  $\pi_{zi}$ corresponding to such defined $I_z$ would take the form: } 

\begin{eqnarray}
\pi_{zi}= \frac{\pi_{ci}\left( 1- \pi_{ri}\right)}{\pi_{ci}+\pi_{ri}-\pi_{ci}\pi_{ri}} \label{eq:identityILRwithO_Sc}.
\end{eqnarray}
{ We do not further consider either of these setups in the current paper.} 
\end{remark}

\begin{remark}
We could generalize the set-up in Section~\ref{sec:setup} to allow each of the convenience and reference samples to participate or be selected from respective frames that are \emph{subsets} of target population $U$. Namely, we may let the non-probability sample be realized from frame $U_c \subseteq U$  and probability sample be selected from frame $U_r \subseteq U$ with respective inclusion probabilities $\pi_{ci}=P\{i \in S_c|i \in U_c\}$ and $\pi_{ri}=P\{i \in S_r|i \in U_r\}$. 
Define $p_{ci}=p_{c}(\mathbf{x}_i)=P\{i\in U_c|i \in U, \mathbf{x}_i\}$ and $p_{ri}=p_{r}(\mathbf{x}_i)=P\{i\in U_r|i \in U, \mathbf{x}_i\}$ to be coverage probabilities. A coverage probability denotes the probability that a unit from the target population is included in the frame from which sample units participate. 

In order to ensure that realized frames are representative of target population, $U$, we assume that both coverage probabilities are strictly positive, $p_{ci} > 0, p_{ri} >0$. 


This assumption of positive coverage probabilities may be violated in the case of hard-to-reach subgroups in the population.  Practitioners should carefully examine the correlation between  population characteristics {of interest}, on the one hand, and covariates,  $\mathbf{x}_i$, used to define coverage probabilities, on the other hand.  If some subgroups are known, by design, to have $0$ probability to participate in the convenience sample, then \citet{2020Valliant} recommend to exclude these subgroups from the studied population.

Formula \eqref{eq:pi_z} in this case would be generalized to
\begin{align} \label{eq:pi_z_with_p}
{\pi_{zi}}=\frac{{\pi_{ci}}p_{ci}}{{\pi_{ci}}p_{ci}+{\pi_{ri}}p_{ri}}.
\end{align}

For example, suppose household units in $U_c \subseteq U$ have broadband internet access and can potentially participate in a web survey. Assume, each household in the target population has positive probability $p_{ci}=P\{i\in U_c|i \in U,\mathbf{x}_i\}$ to have broadband internet access and let this probability be \emph{known} for some broadly defined demographic groups $\mathbf{x}_i.$ At the same time, survey participation probabilities $\pi_{ci}=P\{i\in S_c|i \in U_c,\mathbf{x}_i\}$ are \emph{not known}. Considering the reference sample case, suppose each person in the target population has positive probability to have a mailable address and let this probability $p_{ri}=P\{i\in U_r|i \in U,\mathbf{x}_i\}$ be known for a given demography. Frame $U_r$ is a subset of $U$ consisting of persons who actually have mailable addresses and thus could be selected into the reference sample. In this example, $p_{ci}$ and $p_{ri}$ play the role of known coverage probabilities. 

We make brief mention that CRISP formula \eqref{eq:pi_z_with_p} may also be used in other sampling situations. 
For example, we may define $(p_{ri},p_{ci})$ as first stage participation probabilities in a multi-stage sampling design (rather than viewing them as coverage probabilities) to construct a sampling frame.  In this set-up, $(\pi_{ri},\pi_{ci})$, are second stage inclusion probabilities for units selected from the realized frame.  The first stage probabilities 
may be \emph{unknown}, but once the sampling frame is defined, the second stage inclusion probabilities, by contrast, can be set and \emph{known}.  In this set-up, the focus for inference would be the first stage participation probabilities in the sampling frame, $(p_{ri},p_{ci})$, rather than $(\pi_{ri},\pi_{ci})$.

One class of examples of such a multi-stage participation and selection design belongs to the opt-in panels recruited over the internet. The probability of signing up for a web-based panel (``the first stage'') is not known. On ``the second stage'', once the panel has been recruited, a sample -- with \emph{known} inclusion probabilities -- would be selected for a given survey. 

See also related discussion in \citet{2017elliot} and \citet{2020Valliant}.
In the remainder of the paper, we retain the assumption that both convenience and reference samples are both drawn from the same population frames representing population $U$ for simplicity and without loss of generality.
\end{remark}

Let us now consider joint inclusion probabilities of units $i,j \in U$, $i \neq j$, conditional on covariates. Denote 
\begin{align*}
\pi_{c,ij}=\pi_{c}(\mathbf{x}_i,\mathbf{x}_j)&=P\{i,j\in S_c|i,j \in U_c,\mathbf{x}_i,\mathbf{x}_j\},\\ 
\pi_{r,ij}=\pi_{r}(\mathbf{x}_i,\mathbf{x}_j)&=P\{i,j\in S_r|i,j \in U_r,\mathbf{x}_i,\mathbf{x}_j\}.
\end{align*}

The joint conditional independence of inclusion indicators $I_{ci}$ and $I_{cj}$ stated in Assumption~\ref{itm:indepI_c} is a serious supposition that, nevertheless, is reasonable given conditioning on a rich set of covariates. The same reasoning can be applied to the assumption of conditional independence of the reference sample inclusion indicators $I_{ri}$ and $I_{rj}$ given $\mathbf{x}_i$ and $\mathbf{x}_j.$

We now consider joint probabilities $\pi_{z,ij}=\pi_{z}(\mathbf{x}_i,\mathbf{x}_j)=P\{i,j\in S_c|i,j \in S,\mathbf{x}_i,\mathbf{x}_j\}$  of being in the convenience sample for different units $i \neq j$ in stacked sample $S$. The problem is that a pair of  units in $S$  may represent the same population unit that belongs to both the convenience and reference samples. There is a common understanding in the literature that such overlapping units are correlated and have to be taken under account; see, for example, discussion in \citet{2009elliot},  \citet{2017elliot},   \citet{2021KimTam}, \citet{2022LiuScholtusWaal}. However, identifying overlapping units may be a substantial undertaking since population labels usually are not available and it is often difficult to distinguish units that appear twice in the combined set. The following Theorem implies that, conditional on covariates, indicators $I_{z}$  on staked sample $S$ are nearly independent even when the sampling overlap is large. This Theorem is important since it allows us to use the likelihood for independent Bernoulli variables formulated over all units in the stacked sample.



\begin{theorem}
\label{theo:2_2}
Assume, inclusion indicator variables in each sampling stage are mutually independent conditional on covariates.
Then, conditional on these covariates, covariance between indicator variables $I_{zi}$ and $I_{zj}$ on staked sample $S$ is
\begin{align}\label{eq:pi_zij}
Cov(I_{zi},I_{zj})=-O\left(N^{-1}\right).
\end{align}
\end{theorem}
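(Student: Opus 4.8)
The plan is to express the covariance directly in terms of the pairwise participation probabilities and then to isolate the single source of dependence, namely the possibility that two entries of the stacked sample are in fact the same population unit counted twice. Working conditionally on $\mathbf{x}_i,\mathbf{x}_j$ and on the event $i,j\in S$, I would begin from
\begin{equation*}
Cov(I_{zi},I_{zj})=\pi_{z,ij}-\pi_{zi}\pi_{zj},
\end{equation*}
where the marginals $\pi_{zi}=\pi_{ci}/(\pi_{ci}+\pi_{ri})$ are supplied by CRISP (Theorem~\ref{theo:2_1}). The entire content of the theorem is therefore the claim that the joint probability $\pi_{z,ij}$ deviates from the independence product $\pi_{zi}\pi_{zj}$ only by a term of order $N^{-1}$.

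To evaluate $\pi_{z,ij}$ I would reuse the stacked-populations device of Figure~\ref{fig:scheme} and the Law of Total Probability, now applied to the pair $(i,j)$ rather than to a single unit. The key case split is whether the two stack entries correspond to two \emph{distinct} population units or to one \emph{overlapping} unit drawn into both $S_c$ and $S_r$, which then appears in $S$ twice (once with $I_z=1$ and once with $I_z=0$). Conditioning on this dichotomy, I would argue that: (i) on the distinct-units event, Assumption~\ref{itm:indepI_c} together with Assumption~\ref{itm:indepI_cI_r} and the mutual within-stage independence hypothesis makes the inclusions factor, so the conditional joint probability that both entries are convenience members equals $\pi_{zi}\pi_{zj}$ (up to $O(N^{-1})$ corrections, since conditioning on $j\in S$ perturbs the marginal of $I_{zi}$ only through the same small same-unit event); and (ii) on the same-unit event the two entries cannot both carry $I_z=1$, because one is the convenience copy and the other the reference copy, so their joint contribution is zero. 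This yields
\begin{equation*}
\pi_{z,ij}=\pi_{zi}\pi_{zj}\bigl(1-q_{ij}\bigr)+O(N^{-1}),
\end{equation*}
where $q_{ij}$ denotes the conditional probability that the two entries are the same overlapping unit, whence $Cov(I_{zi},I_{zj})=-\pi_{zi}\pi_{zj}\,q_{ij}+O(N^{-1})$. The negative sign is exactly the anticipated effect of overlap.

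It remains to bound $q_{ij}$, and this is the step I expect to be the main obstacle, since it is where the assumed scaling of the sampling design must be made precise. By Assumption~\ref{itm:indepI_cI_r}, the expected number of overlapping units is $\sum_{k\in U}\pi_{ck}\pi_{rk}$, while the stacked sample has expected size of order $n_c+n_r$. Under the standard regularity assumption that single-unit inclusion probabilities scale like the sampling fractions, $\pi_{ck}=O(n_c/N)$ and $\pi_{rk}=O(n_r/N)$, the expected overlap is $O(n_c n_r/N)$, a vanishing, $O(N^{-1})$, fraction of the roughly $n_c n_r$ convenience--reference entry pairs in $S$. The delicate point is to turn this averaged heuristic into a bound for a \emph{fixed} pair $(i,j)$: I would write $q_{ij}$ as the ratio of a joint overlap-inclusion probability, governed by $\pi_{c,ij}$, $\pi_{r,ij}$ and the single-unit overlap $\pi_{ck}\pi_{rk}$, to the probability that both entries land in $S$, and verify that the scaling assumptions force this ratio to be $O(N^{-1})$ uniformly over pairs. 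Since $\pi_{zi}\pi_{zj}\le 1$ and $q_{ij}=O(N^{-1})$, the covariance is $-O(N^{-1})$, as claimed.
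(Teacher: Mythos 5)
Your proposal is correct in substance and relies on the same stacked\mbox{-}populations device of Figure~\ref{fig:scheme}, but it organizes the argument along a genuinely different decomposition than the paper's. The paper proves Theorem~\ref{theo:2_2} by an exact computation: it treats the pair of entries as two random positions in $2U$, evaluates all four stratum-assignment probabilities (both in $U_c$, both in $U_r$, and the two cross cases) with the combinatorial weights $\tfrac{N(N-1)}{2N(2N-1)}$ and $\tfrac{N^2}{2N(2N-1)}$, and extracts $Cov(I_{zi},I_{zj})=\pi_{zi}\pi_{zj}F(1-FG)=-O(N^{-1})$ at the end. Your case split (two distinct population units versus the two copies of one overlapping unit) isolates precisely the $N$ excess cross-stratum pairs responsible for the mismatch between those two weights: conditional on the entries being distinct population units all four assignments carry equal weight $N(N-1)$, the joint probability factors \emph{exactly} as $\pi_{zi}\pi_{zj}$ (not merely up to $O(N^{-1})$, as you hedge), and the conditional covariance there vanishes; the same-unit event forces $I_{zi}I_{zj}=0$ and occurs with conditional probability $q_{ij}=\frac{1}{N-1}\left[(1-\pi_{zi})\pi_{zj}+\pi_{zi}(1-\pi_{zj})\right]\left(1+o(1)\right)\le \frac{1}{N-1}$, which follows directly from the double-population combinatorics. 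So the step you flag as the main obstacle --- bounding $q_{ij}$ via scaling assumptions on $\pi_{ck},\pi_{rk}$ --- is actually free and needs no appeal to sampling fractions. What your route buys is a transparent explanation of \emph{why} the dependence is negative and driven solely by the overlap; what the paper's exact computation buys is the leading constant $-\tfrac{2}{N-1}\pi_{zi}\pi_{zj}(1-\pi_{zi})(1-\pi_{zj})$ and an unambiguous sign. One caveat: as written, your display $Cov(I_{zi},I_{zj})=-\pi_{zi}\pi_{zj}q_{ij}+O(N^{-1})$ carries an unsigned error of the same order as the term you retain, so it establishes only $|Cov(I_{zi},I_{zj})|=O(N^{-1})$ rather than the negativity asserted in \eqref{eq:pi_zij}; replacing the ``up to $O(N^{-1})$ corrections'' in your step (i) by the exact factorization on the distinct-units event closes this and makes the whole argument tight.
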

The proof of the Theorem \ref{theo:2_2} is given in Appendix \ref{sec:ProofT2.2}. 

\begin{remark}
 Note that indicators $I_{z}$ are nearly independent regardless of the size of the sampling overlap. For example, in the extreme case of the ultimate overlap where the whole population is selected into both samples, the correlation between $I_{z}$ is $-1/(2N-1)$, which is small for any population size of a practical matter.
\end{remark}

\subsection{Modeling using CRISP based on valid likelihood for the observed sample} \label{sec:ILR_PILR}

\citet{beresovsky2019} proposed using CRISP expression \eqref{eq:pi_z} to obtain estimates of $\pi_{ci}$ directly from a likelihood formulated on the combined sample. This approach is labeled ``Implicit Logistic Regression (ILR)'' because parameters are entering the logistic link function ``implicitly'' through a composite function.


Participation probabilities $\pi_{ci}=\pi_{ci}(\boldsymbol{\beta})$ are parameterized as in the CLW approach by $\text{logit}(\pi_{ci}(\boldsymbol{\beta}))=\boldsymbol{\beta^T}\mathbf{x}_i,$  and identity (\ref{eq:pi_z}) is used to present $\pi_{zi}$ as a composite function of $\boldsymbol{\beta}$; that is, $\pi_{zi}=\pi_{zi}(\pi_{ci}(\boldsymbol{\beta}))=\pi_{ci}(\boldsymbol{\beta})/(\pi_{ri}+\pi_{ci}(\boldsymbol{\beta})).$ 
The log-likelihood for observed Bernoulli variable $I_{zi}$ is
\begin{eqnarray}\label{eq:ilr1}
    && \ell^{ILR}(\boldsymbol{\beta}) = \mathop{\sum}_{i \in S_c}\log\left[{\pi_{zi}(\pi_{ci}(\boldsymbol{\beta}))}\right] + \\ 
     && \mathop{\sum}_{i \in S_r}\log\left[1-\pi_{zi}\left(\pi_{ci}(\boldsymbol{\beta})\right)\right]. \nonumber
\end{eqnarray}


{This approach may be labeled ``indirect'' estimation because $\pi_{c}$ is indirectly used through the CRISP in Bernoulli likelihood  formulated for the \emph{observed} indicators $I_{zi}$ of convenience sample on the ``stacked'' set $S=S_c + S_r$.}

\begin{remark} \label{theo:r_indepI_z}
Note, by Theorem \ref{theo:2_2}, indicators $I_{zi}$ are practically \emph{independent} for units on $S$, even if samples $S_c$ and $S_r$ have common units. We use exact likelihood for independent Bernoulli variables. Thus, standard statistical properties and techniques can be applied. For example, we can apply usual likelihood based model selection methods such as AIC or BIC.   
\end{remark}

The score equations are obtained from (\ref{eq:ilr1}) by taking the derivatives, with respect to $\boldsymbol{\beta},$ of composite function $\pi_{zi}=\pi_{zi}(\pi_{ci}(\boldsymbol{\beta})).$ This way, the estimates of $\pi_{ci}$ are obtained directly from (\ref{eq:ilr1}) in a single step. 
The resulting score equations are
\begin{eqnarray}\label{eq:ilr_score}
& \boldsymbol{0} =  S^{ILR}(\boldsymbol{\beta})= \frac{\partial \ell^{ILR}(\boldsymbol{\beta})}{\partial \boldsymbol{\beta}} = \\
& \mathop{\sum}_{i \in S_c}(1-\pi_{zi})(1-\pi_{ci})\mathbf{x}_i   
 -\mathop{\sum}_{i \in S_r}\pi_{zi}(1-\pi_{ci})\mathbf{x}_i. \nonumber
\end{eqnarray}

\subsection{Pseudo-ILR method} \label{sec:PILR}

\citet{2021valliant} also developed an indirect method for estimation of $\pi_{ci}$ as a function of $\pi_{zi}$, which we reveal is a special case of \eqref{eq:pi_z}. For their Adjusted Logistic Propensity (ALP) weighting method, \citet{2021valliant} use a stacking approach as do we.  They introduce an imaginary construct consisting of two parts: they \emph{stack} together non-probability sample $S_c$ (part 1) and finite population $U$ (part 2). Since non-probability sample units belong to the finite population, they appear in the stacked set twice. They formulate a Bernoulli likelihood for variable $I_{{\delta}i}$, where $I_{{\delta}i}=1$ if unit $i$ belongs to part 1 of the stacked set; and $I_{{\delta}i}=0$, otherwise: 
\begin{eqnarray}\label{eq:wang1}
    &&\ell^{ALP}(\boldsymbol{\gamma}) = \\
    && \mathop{\sum}_{i \in S_c}\log\left[{\pi_{{\delta}i}(\boldsymbol{\gamma})}\right] +  \mathop{\sum}_{i \in U}\log\left[1-\pi_{{\delta}i}\left(\boldsymbol{\gamma}\right)\right], \nonumber
\end{eqnarray}
where $\boldsymbol{\gamma}$ is the parameter vector in a logistic regression model for $\pi_{{\delta}i}(\mathbf{x}_i)=P\{I_{{\delta}i}=1|\mathbf{x}_i\}.$
At the second step of the ALP method, estimates of $\pi_{ci}$ are derived from identity
\begin{equation}\label{eq:identityWVL}
\pi_{{\delta}i}= \frac{\pi_{ci}}{\pi_{ci}+1}.   
\end{equation}
Their identity may also be derived from our result in \eqref{eq:pi_z} by replacing $\pi_{ri}$ with $1$ because the second part of the
stacked sample is the finite population. 

Since the finite population is not available, they apply a pseudo-likelihood approach:  
\begin{eqnarray}\label{eq:wang2}
    && \hat{\ell}^{ALP}(\boldsymbol{\gamma}) = \\
    && \mathop{\sum}_{i \in S_c}\log\left[{\pi_{{\delta}i}(\boldsymbol{\gamma})}\right] + \mathop{\sum}_{i \in S_r}w_{ri}\log\left[1-\pi_{{\delta}i}\left(\boldsymbol{\gamma}\right)\right], \nonumber
\end{eqnarray}
leading to an estimate of $\pi_{{\delta}i}.$ 
However, the actual goal is to find probabilities $\pi_{ci}$ rather than $\pi_{{\delta}i}.$ On the second step of the ALP approach, \citet{2021valliant} obtain the estimate of $\pi_{ci}$ using \eqref{eq:identityWVL}.

\citet{2021valliant} noted that in their simulations the ALP estimator was more efficient than the CLW, especially in the case of large non-probability and small probability sample sizes.

The estimation method of \citet{2021valliant} can be modified to a one-step estimation procedure similar to ILR: $\pi_{ci}$ can be parameterized using the logistic link function as $\text{logit}(\pi_{ci}(\boldsymbol{\beta}))=\boldsymbol{\beta^T}\mathbf{x}_i$, while probabilities $\pi_{{\delta}i}$ in (\ref{eq:identityWVL}) could be viewed as a composite function, $\pi_{{\delta}i}=\pi_{{\delta}i}(\pi_{ci}(\boldsymbol{\beta}))=\pi_{ci}(\boldsymbol{\beta})/(1+\pi_{ci}(\boldsymbol{\beta}))$.     
We call this method pseudo-ILR (PILR), and the pseudo-likelihood is 
\begin{eqnarray}\label{eq:wang3}
    && \hat{\ell}^{PILR}(\boldsymbol{\beta}) = 
    \mathop{\sum}_{i \in S_c}\log\left[{\pi_{{\delta}i}(\pi_{ci}(\boldsymbol{\beta}))}\right] + \\
    && \mathop{\sum}_{i \in S_r}w_{ri}\log\left[1-\pi_{{\delta}i}(\pi_{ci}(\boldsymbol{\beta}))\right]. \nonumber
\end{eqnarray}
The score equations are:
\begin{eqnarray}\label{eq:PILR_score}
& \boldsymbol{0} = S^{PILR}(\boldsymbol{\beta})=\frac{\partial \hat{\ell}^{PILR}(\boldsymbol{\beta})}{\partial \boldsymbol{\beta}}=  \\
&  \mathop{\sum}_{i \in S_c}(1-\pi_{{\delta}i})(1-\pi_{ci})\mathbf{x}_i - \mathop{\sum}_{i \in S_r}w_{ri}\pi_{{\delta}i}(1-\pi_{ci})\mathbf{x}_i. \nonumber
\end{eqnarray}

This change in estimation of model parameters makes the approach more efficient and it avoids cases where estimates of $\pi_{ci}$ become greater than $1$, as may occur under the ALP approach where the estimation is performed in two steps.
{See plots of predicted vs true participation probabilities in Figure \ref{fig:step1vs2} of Appendix \ref{sec:step1_vs_step2}.}

\begin{remark}
Similar to Remark \ref{theo:r_indepI_z}, by Theorem \ref{theo:2_2}, indicators $I_{\delta i}$ and $I_{\delta j}$ are nearly independent given $\mathbf{x}_i$ and $\mathbf{x}_j$ for $i \neq j$ in the stacked $S_c$ and $U$ set, even though each $S_c$ unit is included into the stacked set twice, once with $I_{\delta i}=1$ and once with $I_{\delta i}=0.$ Thus, the likelihood for independent Bernoulli variables is well justified for the stacked $S_c$ and $U$ set.
\end{remark}

\begin{remark}
In any of these approaches, there is no need to restrict the parameterization to a linear form and one could parameterize the likelihood as $\text{logit}(\pi_{ci}(\boldsymbol{\beta}))=\psi(\mathbf{x}_i; \boldsymbol{\beta})$ for some general function $\psi.$ For example, \citet{savitsky2023} used splines to estimate the model parameters. 
\end{remark}

\begin{remark}
    In the sample-based approach of Section \ref{sec:ILR_PILR}, reference sample inclusion probabilities $\pi_r$ are assumed to be known for all units in combined sample $S$: for both the probability and non-probability parts of the set. In practice, $\pi_r$'s are often not available for units outside the probability sample. For some designs (for example, in stratified samples), reference sample inclusion probabilities can be easily inferred for all units. \citet{2017elliot} suggest to model $\pi_r$ first, then plug the estimates of $\pi_r$ into the formula; \citet{savitsky2023} use a Bayesian formulation to co-model unknown $\pi_r$ along with $\pi_c$, based on available covariates. Thus, if $\pi_r$'s are available for all units, the sample-based likelihood approach allows to conveniently use them in estimation; if $\pi_r's$ are not available, we can incorporate covariates to derive them. In contrast, pseudo-likelihood approaches do not require $\pi_r's$ outside the reference sample. At the first glance, this may seam like an advantage of the pseudo-likelihood formulations; however, by the same token, the downside is that pseudo-likelihood formulations do not anticipate using this information when it is available, and would have to ignore it.
\end{remark}

\subsection{Inverse probability weighted estimator} \label{sec:ipw}

In the sequel we consider the following H\'{a}jek form of the inverse probability weighted estimator of finite population mean $\mu$:
\begin{equation}\label{eq:hajek}
    \hat{\mu} =\frac{1}{\hat{N}} \mathop{\sum}_{i \in S_c}\frac{y_i}{\hat{\pi}_{ci}},
\end{equation}
where $\hat{N}=\sum_{i \in S_c}(\hat{\pi}_{ci})^{-1}$ and $\hat{\pi}_{ci}$ is an estimate of ${\pi}_{ci}$.

Note that the Hájek estimator of parameter $\mu$ can be presented as a solution of the following estimating equation:
\begin{equation}\label{eq:hajek_est}
    U(\mu)=\mathop{\sum}_{i \in U}I_{ci}\hat{\pi}_{ci}^{-1}(y_i-\mu)=0. 
\end{equation}

\section{Theoretical properties} \label{sec:theorems}

Theoretical properties of the CLW estimators are given in \citet{2020_ChenLiWu}. In Section \ref{sec:asymvars}, we present asymptotic variances of the estimated parameters and rates of convergence to true values for both of the ILR and PILR.  We also include the CLW variance to facilitate our comparison of the relative efficiencies of these methods. In Section \ref{sec:theorstudy}, we discuss the expected behavior of the asymptotic variances for ILR, PLR and CLW. Section \ref{sec:numstudy} presents a numerical study of relative efficiency of the estimators for the methods over a sequence of sampling fractions.

\subsection{Asymptotic variances of CLW, ILR, and PILR estimators} \label{sec:asymvars}

We study asymptotic properties of the estimators using the following framework. Consider a sequence of finite populations $U_{\nu}$, indexed by $\nu$, of increasing sizes $N_{\nu}$, so that $N_{\nu} \to \infty$ as $\nu \to \infty,$ and let non-probability sample $S_{c\nu}$ of size $n_{c\nu}$ and reference probability sample $S_{r\nu}$ of size $n_{r\nu}$ be independently drawn from population $U_{\nu}.$  The probability sample design is assumed to be the same for each $U_{\nu}$ and assume that both sample sizes increase $n_{c\nu} \to \infty$ and $n_{r\nu} \to \infty$ as $\nu \to \infty$ in such a way that sampling fractions $f_c=\lim_{\nu \to \infty}n_{c\nu}/N_\nu$ and $f_r=\lim_{\nu \to \infty}n_{r\nu}/N_\nu.$ The same limiting process in considered by \citet{2020_ChenLiWu}.
As is customary, to simplify notation, we drop index $\nu$ in the rest of the paper.  

We first present the asymptotic variances of the estimates of parameters $\mu$ and $\boldsymbol{\beta}$ in a general form applicable for all three methods. Such a representation aids in comparing methods: 
\begin{eqnarray}
  && Var(\hat{\mu}) \doteq   N^{-2}\{Var[U(\mu)] \label{eq:varmu_general}\\ 
  && -2\boldsymbol{b}^TCov[U(\mu),{S}(\boldsymbol{\beta})]+  \boldsymbol{b}^TVar[{S}(\boldsymbol{\beta})]\boldsymbol{b}\}, \nonumber \\
 && Var(\boldsymbol{\hat{\beta}}) \doteq \boldsymbol{H}^{-1}Var[{S}(\boldsymbol{\beta})]\boldsymbol{H}^{-1} \label{eq:varbeta_general},
\end{eqnarray}
where 
$U(\mu)$ is the score function for $\mu$, ${S}(\boldsymbol{\beta})$ are the score functions for parameters $\boldsymbol{\beta}$ in a given model for $\pi_{ci}$, 
and $\boldsymbol{b}=S_{\boldsymbol{\beta}}^{-1}U_{\boldsymbol{\beta}}^T$, 
$U_{\boldsymbol{\beta}} = E[{\partial U(\mu)}/{\partial \boldsymbol{\beta}^T}]$, 
$S_{\boldsymbol{\beta}} = E[{\partial S(\boldsymbol{\beta})}/{\partial \boldsymbol{\beta}}]$, $\boldsymbol{H}=-S_{\boldsymbol{\beta}}.$ 
See Appendix \ref{sec:ProofT5.1} for the derivation.

The first term in \eqref{eq:varmu_general},  $Var[U(\mu)],$ is the design-based variance under the probability sampling design for the convenience sample, and the other two terms are related to the estimation of unknown participation probabilities $\pi_{ci}$ involved in the probability weighted estimator for $\mu.$ Importantly, (\ref{eq:varmu_general}) contains variance $Var[{S}(\boldsymbol{\beta})]$ due to estimation of parameters $\boldsymbol{\beta},$  which shows that efficiency of estimation of $\mu$ depends on the quality of the model for $\pi_{ci}.$ The score function for $\boldsymbol{\beta}$ can be further decomposed into two independent parts, corresponding to the non-probability and reference samples: 
\begin{align}
{S}(\boldsymbol{\beta})={S_c}(\boldsymbol{\beta})+{S_r}(\boldsymbol{\beta}), 
\end{align}
and the corresponding variance is   
\begin{eqnarray}
&&Var[S(\boldsymbol{\beta})]= \\
&&Var[S_c(\boldsymbol{\beta})]+Var[S_r(\boldsymbol{\beta})]=:\mathbf{A}+\mathbf{D}. \nonumber
\end{eqnarray}
Unless the reference sample is large, the design variance $\mathbf{D}=Var[S_r(\boldsymbol{\beta})]$ is a significant contributor into the overall variance. In Section  \ref{sec:theorstudy} below, we discuss and compare the effect of the design-based variance for each of the three methods.

We now follow with a formal statement that spells out detailed formula for each of the methods, CLW, ILR, and PILR. Assume the following regularity conditions:
\begin{enumerate}
    \item [C1] Matrix $\sum_{i \in U}\mathbf{x}_i\mathbf{x}_i^T$ is positive-definite.
    \item [C2] There exist $c_2 \ge c_1 >0$ such that $c_1 \le \pi_{ci}N/n_c\le c_2$ and $c_1 \le \pi_{ri}N/n_r\le c_2$ for all units $i.$
\end{enumerate}

\begin{theorem}[Consistency of the estimators]\label{theo:consist}
Assume logistic regression parameterization for the non-probability sample participation probabilities, 
\\
$\text{logit}\left[\pi_{ci}(\boldsymbol{\beta})\right]=\boldsymbol{\beta^T}\mathbf{x}_i.$  Then, under regularity conditions C1-C2, estimators $\hat{\boldsymbol{\eta}}_m=(\hat{\mu}_m,\hat{\boldsymbol{\beta}}_m)$ for $\boldsymbol{\eta}=({\mu},{\boldsymbol{\beta}})$ based on CLW, ILR, or PILR methods are design consistent, $\hat{\boldsymbol{\eta}}_m-\boldsymbol{\eta}=O_p(\text{min}(n_c,n_r)^{-1/2})$, and variances are 
\begin{eqnarray}
   && Var(\hat{\mu}_m) \doteq N^{-2}  
    \mathop{\sum}_{i \in U} \frac{1-\pi_{ci}}{\pi_{ci}}(y_i-\mu)^2    
      \label{eq:varmuM}  \\
   &&  - 2 N^{-2}\boldsymbol{b}_m^T\boldsymbol{C}_m + 
N^{-2}\boldsymbol{b}_m^T(\boldsymbol{A}_m + \boldsymbol{D}_m)\boldsymbol{b}_m  , \nonumber  \\
   && Var(\boldsymbol{\hat{\beta}}_m) \doteq \boldsymbol{H}_m^{-1}
   (\boldsymbol{A}_m+\boldsymbol{D}_m)\boldsymbol{H}_m^{-1} \label{eq:varbetaM},
\end{eqnarray}

where  subscript $m$ stands for methods (CLW, ILR, or PILR), $\boldsymbol{b}_m=\boldsymbol{H}_m^{-1}\sum_{i \in U}(1-\pi_{ci})(y_i-\mu)\mathbf{x}_i,$ and the other terms are spelled out as follows:

For the CLW method,
\begin{align*}
\boldsymbol{C}_{CLW}&=\sum_{i \in U}(1-\pi_{ci})(y_i-\mu)\mathbf{x}_i,\\
\boldsymbol{H}_{CLW}&=\boldsymbol{A}_{CLW}=\sum_{i \in U}\pi_{ci}(1-\pi_{ci})\mathbf{x}_i\mathbf{x}_i^T,
\end{align*}
and
\begin{align*}
\boldsymbol{D}_{CLW}=Var_d\left[\sum_{i \in S_r}w_{ri}\pi_{ci}\mathbf{x}_i\right]
\end{align*}
is the design-based variance-covariance matrix under the probability sampling design for $S_r$.

For the ILR method,
\begin{eqnarray*}
 && \boldsymbol{C}_{ILR}=  
  \mathop{\sum}_{i \in U}(y_i-\mu)(1-\pi_{zi})(1-\pi_{ci})^2\mathbf{x}_i,  \nonumber \\
  && \boldsymbol{H}_{ILR}=   \sum_{i \in U} (\pi_{ci}+\pi_{ri}) 
   \pi_{zi}(1-\pi_{zi})(1-\pi_{ci})^2\mathbf{x}_i\mathbf{x}_i^T, \nonumber\\ 
  && \boldsymbol{A}_{ILR}=\boldsymbol{H}_{ILR}-  \sum_{i \in U}(\pi_{ri}+1) \times \\
  && \pi_{ci}\pi_{zi}(1-\pi_{zi})(1-\pi_{ci})^2\mathbf{x}_i\mathbf{x}_i^T, \nonumber 
\end{eqnarray*}
and 
\begin{align*}
\boldsymbol{D}_{ILR}=Var_d\left[\mathop{\sum}_{i \in S_r}\pi_{zi}(1-\pi_{ci})\mathbf{x}_i \right]
\end{align*}
is the design-based variance-covariance matrix under the probability sampling design for $S_r$.

For the PILR method,
\begin{eqnarray*}
&& \boldsymbol{C}_{PILR} =  
 \mathop{\sum}_{i \in U}(y_i-\mu)(1-\pi_{{\delta}i})(1-\pi_{ci})^2\mathbf{x}_i, \nonumber \\
&& \boldsymbol{H}_{PILR} =  
 \sum_{i \in U}(\pi_{ci}+1)\pi_{{\delta}i}(1-\pi_{{\delta}i})(1-\pi_{ci})^2\mathbf{x}_i\mathbf{x}_i^T, \nonumber\\ 
&& \boldsymbol{A}_{PILR} = \boldsymbol{H}_{PILR} \\
&& -2\mathop{\sum}_{i \in U}\pi_{ci}\pi_{{\delta}i}(1-\pi_{{\delta}i})(1-\pi_{ci})^2\mathbf{x}_i\mathbf{x}_i^T, \nonumber
\end{eqnarray*}
and
\begin{eqnarray*}
&& \boldsymbol{D}_{PILR}= 
 Var_d\left[\sum_{i \in S_r}w_{ri}\pi_{ci}(1-\pi_{ci})(1-\pi_{{\delta}i})\mathbf{x}_i\right] \nonumber
\end{eqnarray*}
is the design-based variance-covariance matrix under the probability sampling design for $S_r$. 
\end{theorem}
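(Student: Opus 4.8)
The plan is to view each $\hat{\boldsymbol\eta}_m=(\hat\mu_m,\hat{\boldsymbol\beta}_m)$ as the root of a stacked system of estimating equations, namely the method-specific $\boldsymbol\beta$-score ($S^{CLW}$, $S^{ILR}$ as in \eqref{eq:ilr_score}, or $S^{PILR}$ as in \eqref{eq:PILR_score}) together with the H\'ajek equation \eqref{eq:hajek_est} for $\mu$, and to apply the usual sandwich/Taylor-expansion machinery for $M$-estimators on the joint randomization distribution induced by the latent convenience participation and the reference sampling design. The system is block-triangular: $S(\boldsymbol\beta)$ does not involve $\mu$, while \eqref{eq:hajek_est} depends on $\boldsymbol\beta$ only through the fitted $\hat\pi_{ci}(\boldsymbol\beta)$. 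I would therefore first handle $\hat{\boldsymbol\beta}_m$ and then propagate its fluctuation into $\hat\mu_m$.

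For consistency I would check that the expected estimating function has the true value as an isolated zero with nonsingular Jacobian. Condition C1 delivers identifiability through positive-definiteness of $\sum_{i\in U}\mathbf x_i\mathbf x_i^T$, which together with condition C2 (all probabilities bounded away from $0$ and $1$ on the $N/n$ scale) makes $\boldsymbol H_m=-E[\partial S_m/\partial\boldsymbol\beta]$ positive-definite and the score and its derivative uniformly bounded; for the pseudo-likelihood methods one additionally invokes design-consistency of the Horvitz--Thompson estimator that replaces the population sum. Writing $S(\boldsymbol\beta)=S_c(\boldsymbol\beta)+S_r(\boldsymbol\beta)$, the convenience piece $S_c$ contributes to $\hat{\boldsymbol\beta}_m$ at order $n_c^{-1/2}$ (via the conditional independence of the $I_{ci}$, Assumption~\ref{itm:indepI_c}) and the independent reference piece $S_r$ at order $n_r^{-1/2}$ (Assumption~\ref{itm:indepI_cI_r}), so the root inherits the slower rate and $\hat{\boldsymbol\beta}_m-\boldsymbol\beta=O_p(\min(n_c,n_r)^{-1/2})$; the same rate transfers to $\hat\mu_m$ through the linearization below.

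To obtain the variances I would linearize. Expanding $S(\hat{\boldsymbol\beta}_m)=\boldsymbol 0$ about $\boldsymbol\beta$ gives $\hat{\boldsymbol\beta}_m-\boldsymbol\beta\doteq\boldsymbol H_m^{-1}S(\boldsymbol\beta)$, and taking variances reproduces \eqref{eq:varbeta_general}, hence \eqref{eq:varbetaM} after substituting the method-specific pieces, using that $S_c$ and $S_r$ are independent so $Var[S]=\boldsymbol A_m+\boldsymbol D_m$. Expanding $U(\hat\mu_m)=0$ jointly in $(\mu,\boldsymbol\beta)$, substituting the $\boldsymbol\beta$-expansion, and using $-\partial U/\partial\mu\doteq N$ yields $\hat\mu_m-\mu\doteq N^{-1}\{U(\mu)-\boldsymbol b_m^T S(\boldsymbol\beta)\}$ with $\boldsymbol b_m=S_{\boldsymbol\beta}^{-1}U_{\boldsymbol\beta}^T=\boldsymbol H_m^{-1}\sum_{i\in U}(1-\pi_{ci})(y_i-\mu)\mathbf x_i$, where the closed form for $U_{\boldsymbol\beta}$ follows from $\partial\pi_{ci}/\partial\boldsymbol\beta=\pi_{ci}(1-\pi_{ci})\mathbf x_i$. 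Taking the variance of this linear statistic gives \eqref{eq:varmu_general} and hence \eqref{eq:varmuM}, the leading term $N^{-2}\sum_{i\in U}(1-\pi_{ci})\pi_{ci}^{-1}(y_i-\mu)^2$ being $N^{-2}Var[U(\mu)]$ under Assumption~\ref{itm:indepI_c}.

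The bulk of the work, and the step I expect to be the main obstacle, is the explicit evaluation of the method-specific matrices $\boldsymbol H_m=-E[\partial S_m/\partial\boldsymbol\beta]$, $\boldsymbol A_m=Var[S_c(\boldsymbol\beta)]$, $\boldsymbol C_m=Cov[U(\mu),S_c(\boldsymbol\beta)]$, and $\boldsymbol D_m=Var_d[S_r(\boldsymbol\beta)]$. Each is a routine but lengthy calculation that differentiates the composite maps $\pi_{zi}=\pi_{ci}/(\pi_{ci}+\pi_{ri})$ (ILR) and $\pi_{{\delta}i}=\pi_{ci}/(1+\pi_{ci})$ (PILR) by the chain rule and then takes expectations term by term: $\boldsymbol A_m$ and $\boldsymbol C_m$ follow from $Var[I_{ci}]=\pi_{ci}(1-\pi_{ci})$ under Assumption~\ref{itm:indepI_c} (with $Cov[U,S_r]=0$ by Assumption~\ref{itm:indepI_cI_r}, which is what lets $\boldsymbol C_m$ be written against $S_c$ alone), while $\boldsymbol H_m$ requires the design expectation of the reference part. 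The heaviest bookkeeping is for ILR and PILR, where the repeated factors $(1-\pi_{ci})^2$ and the weights $(\pi_{ci}+\pi_{ri})$ or $(\pi_{ci}+1)$ must be collected and the stated simplifications verified (for instance $(\pi_{ci}+\pi_{ri})-(\pi_{ri}+1)\pi_{ci}=\pi_{ri}(1-\pi_{ci})$ in $\boldsymbol A_{ILR}$). Throughout, I would appeal to Theorem~\ref{theo:2_2} to justify treating the stacked indicators as exactly independent Bernoulli in forming the likelihoods, the induced error in each variance being $O(N^{-1})$ and thus negligible at the stated order; the reference contributions $\boldsymbol D_m$ are left as design variances because they hinge on the unspecified second-order inclusion probabilities of the reference design.
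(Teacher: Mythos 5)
Your proposal follows essentially the same route as the paper's Appendix proof: stacking the H\'ajek equation with the method-specific score, exploiting the block-triangular Jacobian ($U_\mu=-N$, $U_{\boldsymbol\beta}=-\sum_{i\in U}(y_i-\mu)(1-\pi_{ci})\mathbf x_i^T$), applying the Taylor/sandwich linearization to get $\hat\mu_m-\mu\doteq N^{-1}\{U(\mu)-\boldsymbol b_m^TS(\boldsymbol\beta)\}$ and $\hat{\boldsymbol\beta}_m-\boldsymbol\beta\doteq\boldsymbol H_m^{-1}S(\boldsymbol\beta)$, splitting $Var[S]=\boldsymbol A_m+\boldsymbol D_m$ via the independence of the convenience and reference pieces, and reading off the rates from condition C2. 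The only remaining work is the explicit chain-rule evaluation of $\boldsymbol H_m,\boldsymbol A_m,\boldsymbol C_m,\boldsymbol D_m$ for each method, which you correctly identify and whose ingredients you state accurately, so the proposal is sound.
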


The details of the proof for ILR and PILR are given in Appendix \ref{sec:ProofT5.1}.


\subsection{Analysis of the relative efficiency of the methods} \label{sec:theorstudy}
In this Section, we take a closer look at the components of variances presented in Section \ref{sec:asymvars}. 

Consider a situation where the whole population $U$ is available in place of a reference sample (that is,  sampling fraction $f_r=n_r/N =1$).  Since Bernoulli likelihood (\ref{eq:chen1})  used in the CLW method describes the true data-generating process at the population level,  the CLW approach is the preferable one for estimation  in the case where the whole $U$ is observed. Under this scenario, design-based variance-covariance matrices $\boldsymbol{D}_{CLW},$ $\boldsymbol{D}_{ILR}$, and $\boldsymbol{D}_{PILR}$  are reduced to zero matrices, reflecting the fact that variability due to the probability sampling does not contribute to respective total variances of parameters  ${\mu}$ and ${\boldsymbol{\beta}}.$ In particular, the variance of parameter $\boldsymbol{\beta}$ given by (\ref{eq:varbetaM}) would be reduced to $\boldsymbol{H}_{CLW}^{-1}$ and equal to the  variance  of the maximum likelihood estimator from Bernoulli likelihood (\ref{eq:chen1}).  In this special case, the ILR and PILR methods are identical  and yield estimators that are less efficient than CLW estimators. 

Let us now turn to a more common situation where probability sample $S_r$ with sampling fraction $f_r=n_r/N<1$  is available rather than the whole population $U.$  To a large degree, respective contributions of  design variances  $\boldsymbol{D}_{CLW},$ $\boldsymbol{D}_{ILR}$, and $\boldsymbol{D}_{PILR}$   determine the size of the total variances.  Consider 
\begin{align}
\boldsymbol{D}_{CLW}
&= {Var_{d}}\left[ \sum\nolimits_{S_r}{  {g_i} {{\mathbf{x}}_{i}}} \right], \label{eq:D.CLW} \\
     \boldsymbol{D}_{ILR}  
     &= 
    {Var_{d}}\left[ \sum\nolimits_{S_r}{ {\frac{ g_i}{ 1 + g_i }}  \left(1-{{\pi }_{ci}} \right) {{\mathbf{x}}_{i}}} \right], \label{eq:D.ILR}\\
 \boldsymbol{D}_{PILR} 
 &= {Var_{d}}\left[ \sum\nolimits_{S_r}{  {g_i}  \frac{ 1-{{\pi }_{ci}} }{ 1+{{\pi }_{ci}} }  {{\mathbf{x}}_{i}}} \right], \label{eq:D.PILR} 
\end{align}
where $g_i=g\left( \mathbf{x}_i\right) ={\pi_{c}\left( \mathbf{x}_i\right)}/{\pi_{r}\left( \mathbf{x}_i\right)} \in \left(0, \infty \right)$. Variation in ratio  $g_i$ depends on the degree of common support, or overlap, between two samples in the covariate-defined domains. In the case of ``high overlap'', $\pi_{ci}$ and $\pi_{ri}$ would have ``similar dependence'' on $\mathbf{x}_i$, and there will be relatively minor fluctuations of ratios $g_i$. 
However, in the case of ``low overlap'', $\pi_{ci}$ and $\pi_{ri}$ would have an ``opposite dependence'' on $\mathbf{x}_i$  and there would be a significant variation of  $g_i.$ 

In context of the observational studies, \citet{Firth1993} and \citet{HeinzeSchemper2002} considered estimation of parameters of generalized linear models for the case of small overlap, or ``separation'', between treated and control units in variable-defined domains. They showed that separation in variable-defined domains 
may result in biased and unstable estimates of model parameters. 


Expressions (\ref{eq:D.CLW}) - (\ref{eq:D.PILR}) indicate that susceptibility to the effect of a low overlap in variable-defined domains  varies between the three methods. Expression (\ref{eq:D.CLW}) for  $\boldsymbol{D}_{CLW}$ 
includes $g_i$  as a factor that potentially can vary in the whole interval $\left(0, \infty \right)$.  Expression  (\ref{eq:D.PILR})  defining design variance $\boldsymbol{D}_{PILR}$  
differs from the CLW expression by factor  $(1-\pi_{ci})/(1+\pi_{ci})<1$, that could soften the effect from the variability of  $g_i$.  In expression (\ref{eq:D.ILR}) for $\boldsymbol{D}_{ILR},$ 
factor ${g_i}/(1 + g_i)$  varies between 0 and 1, thus indicating lower susceptibility of the ILR method to the low overlap problem compared to CLW or PILR. 

This quick analysis demonstrates that low overlap in covariate-defined domains may affect stability of the estimates, and the impact varies between the three methods.  

{The differences in estimated variances between the three methods are expected to be more pronounced for the situation of a large non-probability sampling fraction coupled with a small probability sampling fraction. This situation is encountered in the practical setting of a large-sized administrative data for the non-probability  (convenience) sample and a small reference sample obtained from a probability survey. In the sequel, this expectation is confirmed in numerical and simulation studies.}

\subsection{Numerical study using theoretical quantities} \label{sec:numstudy}
We present a numerical study of this influence of sampling fractions for reference and convenience samples, on the one hand, on the quality of estimates for the $3$ methods, on the other hand, by  computing  variances  for CLW,  ILR, and PILR on a grid of values of sampling fractions $\left(f_r, f_c \right)=\left(n_r/N, n_c/N \right)$ under the low and high overlap scenarios. 

For each unit $i=1\ldots,N$ in finite population $U$ of size $N=100,000$, we generate  covariate $x_i$ as independent standard normal variable, $x_i \sim N \left(0, 1 \right)$. Outcome $y_i$ is generated as a normal variable with mean depending on $x_i$ and constant variance, $y_i \sim N \left(1 + x_i, 1.5^2 \right)$.    

We use a Poisson sampling design with participation probabilities $\pi_{ci}$ to select convenience sample $S_c$ from population $U$. Probabilities $\pi_{ci}$ are generated depending on $x_i$ as
\begin{equation}\label{eq:numtrueNP}
\text{logit}(\pi_{ci})=\beta_{c0} + \beta_{c1}x_{i}.
\end{equation}
We vary values of intercept $\beta_{c0}$ to scan across a range of sampling fractions $f_c \in(0,1)$. The value of  slope $\beta_{c1}$ is set to $1.$ 

We select reference sample $S_r$ using the probability proportional to size (PPS) sampling with the measure of size $m_{ri}$ generated as
\begin{equation}\label{eq:numtrueP}
\text{logit}(m_{ri})=1+\beta_{r} x_{i}.
\end{equation}
For any size measures $m_{r} \left( x_i \right) $, overall sampling fraction $f_r$ can be tuned to a desired value using functionality of \texttt{sampling} R-package by \citet{samplingpkg2021}. 

For the low overlap scenario,  we simulate the ``opposite dependence'' of $\pi_{ci}$ and $\pi_{ri}$ on covariate $x_i$ by setting $\beta_{r}= -1. $ For the high overlap scenario, we generate ``similar dependence'' by setting $\beta_{r}= 1,$ so that  it is the same as slope  $\beta_{c1}$ in the generating model for $\pi_{ci}$.  Figure \ref{fig:overlap} shows the resulting histogram of sample densities in domains of covariate $x_i.$  Figure \ref{fig:overlap} helps to visualize and clarify what is meant by low and high overlap,  in a simple case of one covariate. In the case of multivariate models with a large number of auxiliary variables and interactions, it is likely to have very little overlap in some of the variable-defined domains, even when samples are large. A similar phenomenon  in machine learning is called ``the curse of dimensionality'' and it is recognized as a reason for unstable and not reproducible predictions.        
 
 \begin{figure}
  \centering
    \includegraphics[width = 1.0\linewidth
    ]{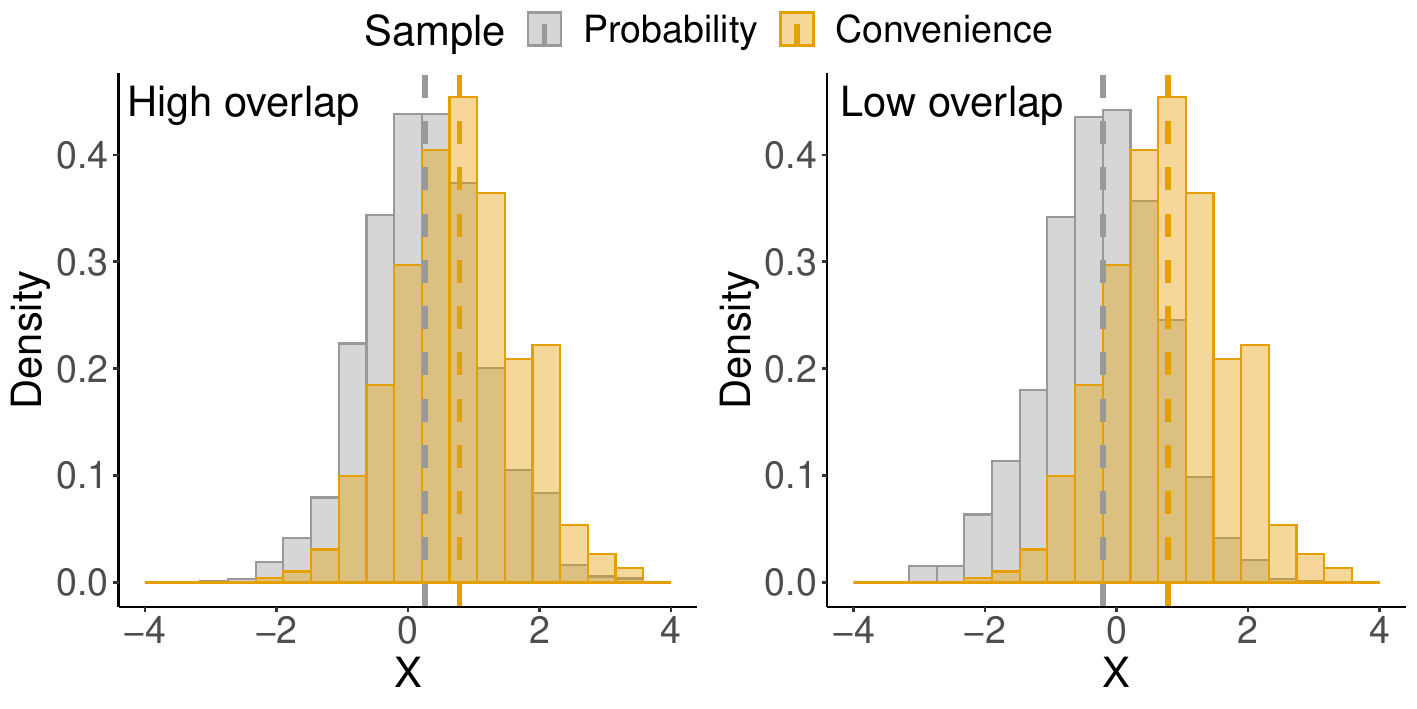}
     \caption{\small{ High overlap:  $\beta_r =  \beta_{c1}$; Low overlap: $\beta_r = - \beta_{c1}$ } } \label{fig:overlap}
\end{figure}

\begin{figure}
  \centering
    \includegraphics[width=1.0\linewidth] 
  {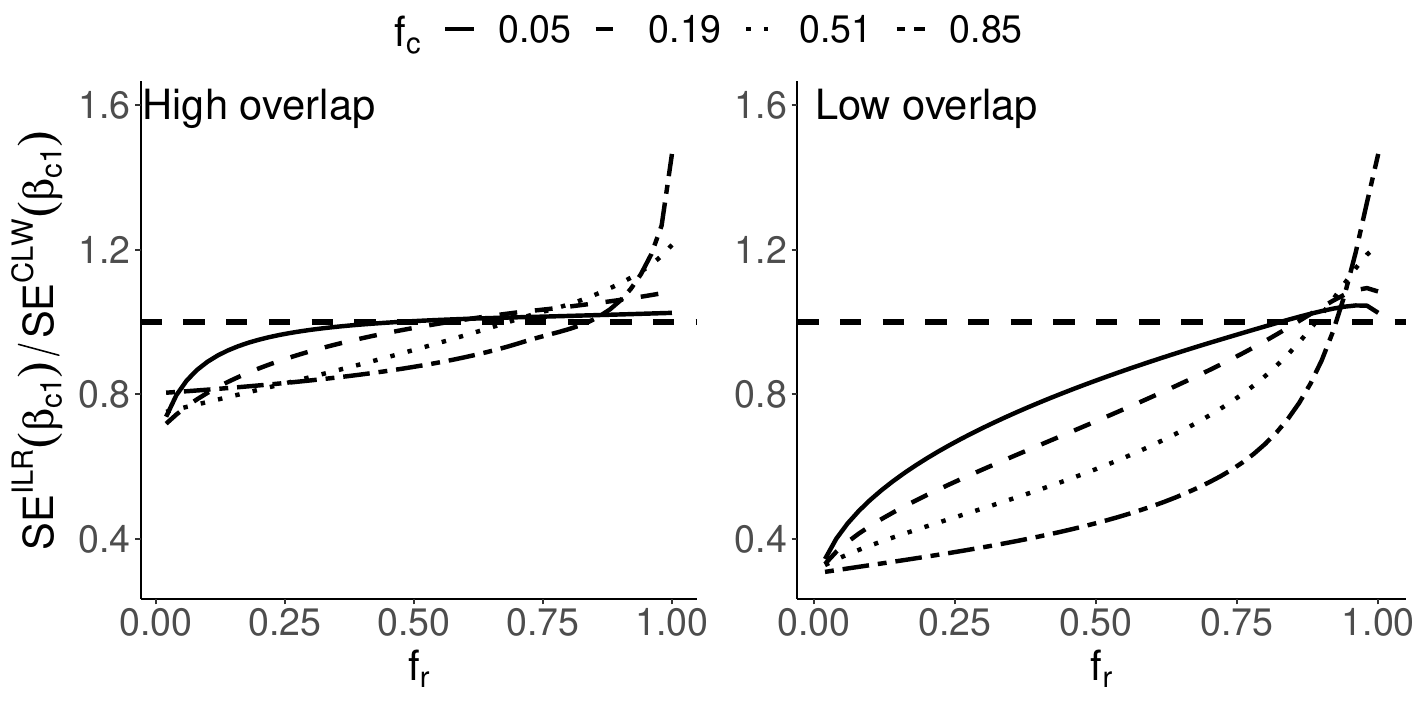}
     \caption{ Relative standard error of ILR and CLW estimates of propensity model parameter $\beta_{c1}$  } \label{fig:counterbeta}
\end{figure}

\begin{figure}
  \centering
    \includegraphics[width = 1.0\linewidth]
  {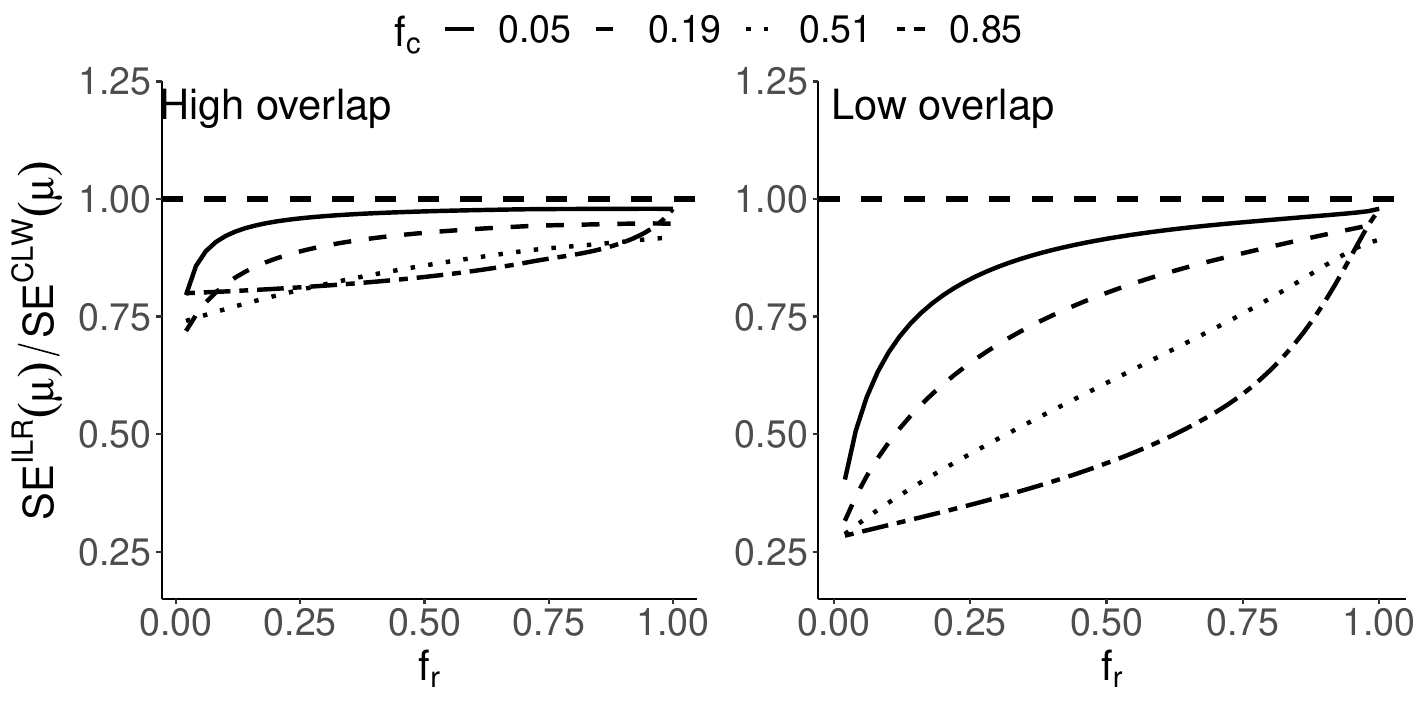}
     \caption{ Relative standard error of ILR and CLW estmates of population mean $\mu$ } \label{fig:countermu}
\end{figure}

\begin{figure}
  \centering
    \includegraphics[width=1.0\linewidth] 
  {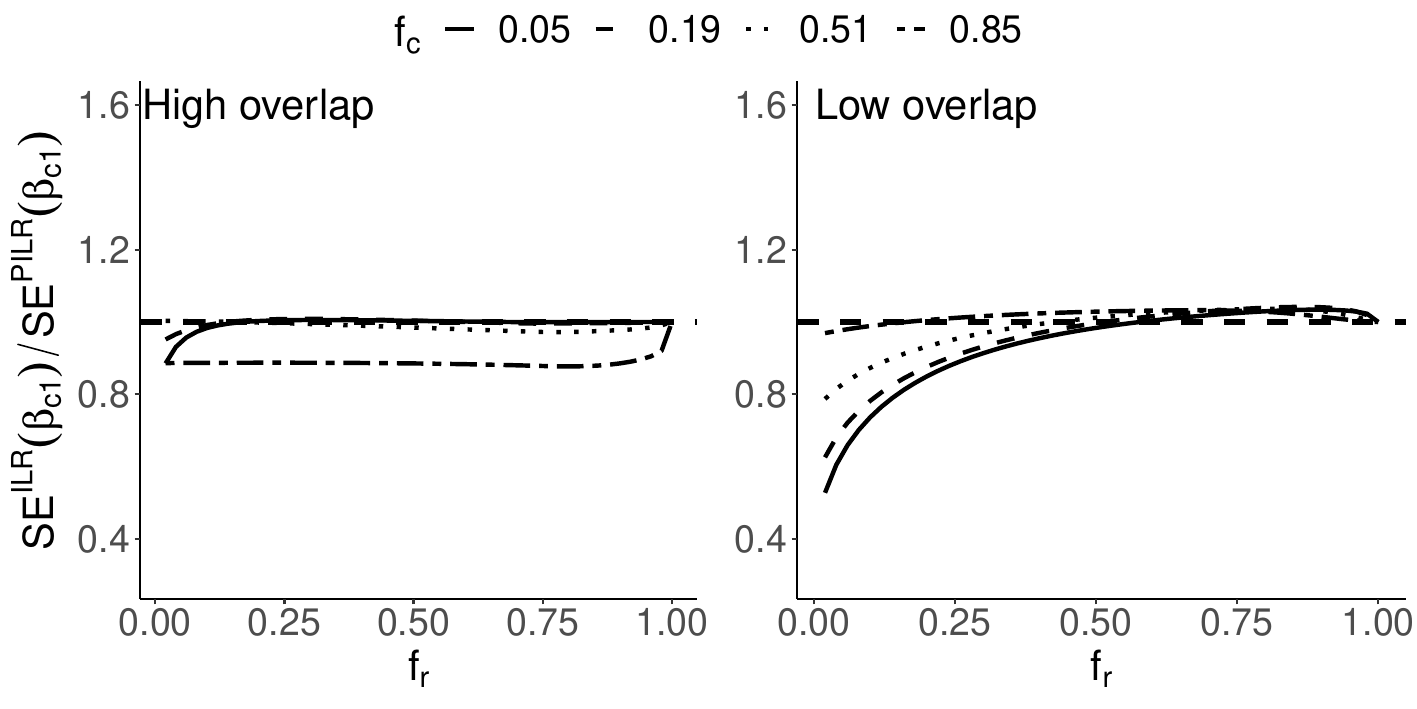}
     \caption{ Relative standard error of ILR and PILR estimates of propensity model parameter $\beta_{c1}$  } \label{fig:counterbetaPILR}
\end{figure}

\begin{figure}
  \centering
    \includegraphics[width = 1.0\linewidth]
  {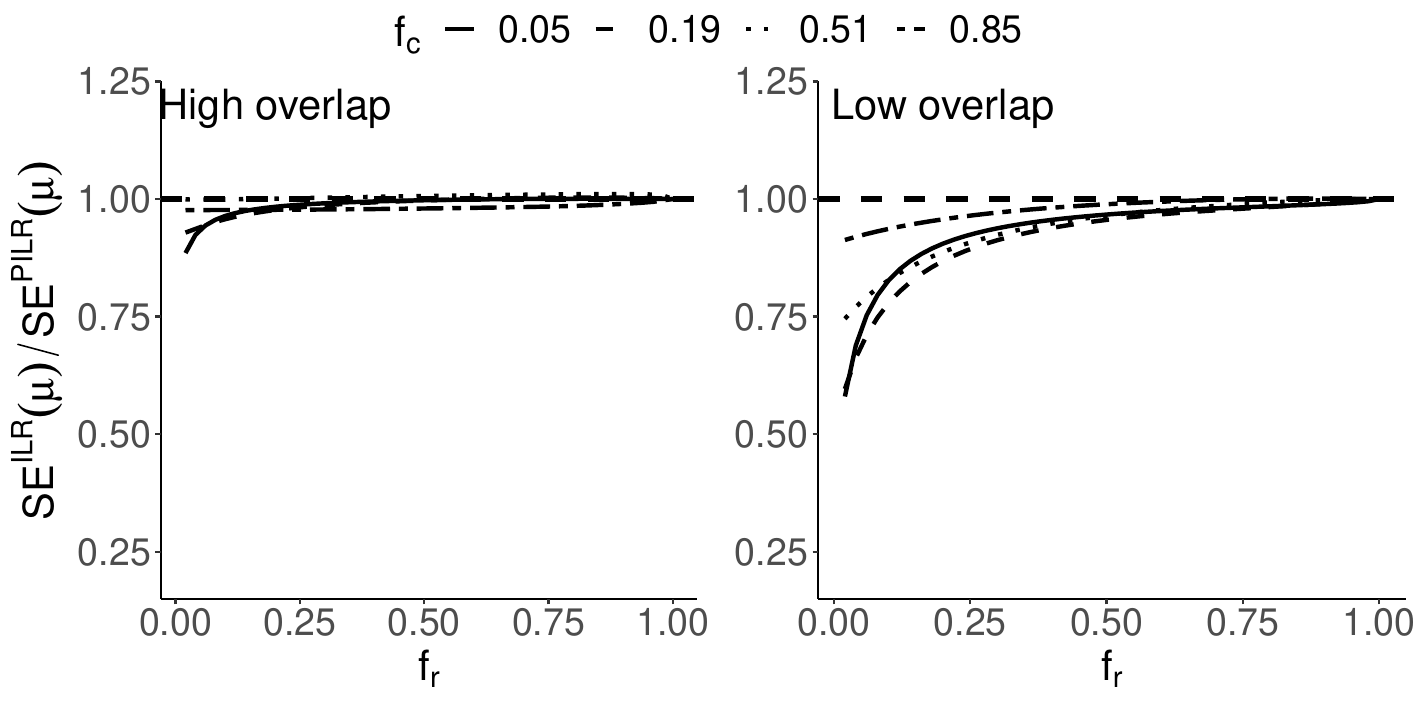}
     \caption{ Relative standard error of ILR and PILR estmates of population mean $\mu$ } \label{fig:countermuPILR}
\end{figure}

We use formulas of Section \ref{sec:asymvars}  to compute variances of estimates ${\mu}$ and ${\boldsymbol{\beta}},$  for each considered method on the grid of sampling fractions $f_r \in \left[ 0.02, 1 \right]$ and values of $f_c = \left\{ 0.05, 0.19,0.51,0.85 \right\}$, for high and low  overlap in variable-defined domains. Figures \ref{fig:counterbeta} and \ref{fig:countermu} show ratios of standard errors of ILR and CLW estimates of $\mu$ and $\beta_{c1}$. Similarly, Figures \ref{fig:counterbetaPILR} and \ref{fig:countermuPILR} show ratios of standard errors of ILR and PILR estimates of $\mu$ and $\beta_{c1}$. 
Notice that design variance $\boldsymbol{D}_m$ enters variance formula \eqref{eq:varmuM} and  \eqref{eq:varbetaM} for both estimates of $\mu$ and $\boldsymbol{\beta}$. This explains similar dependencies of standard error ratios on sampling fractions $f_c$ and $f_r$  in Figures for estimates of $\mu$ and $\beta_{c1}$. 

Figure \ref{fig:counterbeta} demonstrates that CLW is the most efficient (lowest standard error) method for estimating response propensity when sampling fractions $f_r$ approach $1$. 
In this limit CLW pseudo-likelihood \eqref{eq:chen2} practically matches exact Bernoulli likelihood \eqref{eq:chen1} for non-probability sample indicator $I_{ci}$, which explains CLW advantage over the other methods. This advantage is not observed for the estimates of population means in Figure \ref{fig:countermu}. 

Relative efficiency of the CLW method disappears for smaller sampling fraction $f_r$, which is the case for majority of probability surveys used in statistical offices. This effect is particularly pronounced for the low overlap scenario.  The CLW method becomes increasingly inefficient relative to ILR when non-probability fraction $f_c$ is large while probability sampling fraction  $f_r$ is small. This scenario is practically relevant when large administrative data, usually viewed as convenience sample, collects information which can complement and improve estimates from a relatively small probability survey.


\section{Simulation study}\label{sec:sims}
We conduct a Monte Carlo simulation study to compare the relative performances of the CLW, ILR, and PILR estimates under various scenarios for sample sizes, sampling fractions, and degrees of overlap of shared covariates of the convenience and reference samples, that allow us to validate our theoretical results in \eqref{eq:D.CLW}, \eqref{eq:D.ILR} and \eqref{eq:D.PILR}. A secondary purpose of our simulation study is to demonstrate the validity of our theory-derived, plug-in variance estimators of \eqref{eq:varmuM} and \eqref{eq:varbetaM} for variances of the estimated parameters $\mu$ and $\boldsymbol{\beta}$. 
Our simulation study also demonstrates that inference for ILR remains robust no matter the degree of overlap in units between the reference and convenience samples in the case of independent sampling into each. 

\subsection{Simulation setup}

Table \ref{tab:sim_scenarios} outlines the various scenarios utilized in a Monte Carlo (MC) simulation study. In each scenario, convenience and reference samples are drawn $A=1,000$ times from the \emph{same} population. The convenience sample selection probabilities were generated using expression \eqref{eq:numtrueNP} with true coefficient value $\beta_{c1}=1.$ The outcome variable $y_i$ is generated from the normal distribution as $y_i \sim N \left(1 + x_i, 1.5^2 \right)$, where $x_i$ is generated as the independent standard normal variate, so that mean $\mu=1$.

Each of the scenarios is implemented under low and high overlap settings, as described in Section \ref{sec:theorstudy} and illustrated in Figure \ref{fig:overlap}.  Sampling fractions $f_c$ are made equal to approximately $0.10$ or $0.01$ by setting intercept $\beta_{c0}$ in the non-probability survey participation probabilities model \eqref{eq:numtrueNP} to, respectively, $-2.5$ or $-5.0$, while keeping the model coefficient $\beta_{c1}=1.$ Reference samples with fractions $f_r$ equal to $0.01$ and $0.10$ are selected using the probability proportional to size (PPS)  without replacement sampling using \texttt{sampling} R-package by \citet{samplingpkg2021}. The large convenience sample coupled with selecting the whole finite population as the reference sample in Scenario S7 in Table \ref{tab:sim_scenarios} allows us to study the ``ultimate overlap'' case in which all units of the large non-probability sample appear in the stacked set twice. This scenario is implemented by setting intercept $\beta_{c0}=0$.

For this study, reference sample inclusion probabilities $\pi_r$ are assumed to be know for all units in stacked set $S$.

\begin{table}
\centering
 \caption{Summary of considered simulation scenarios.}  \label{tab:sim_scenarios}
\begin{tabular}{crr|rr}
 & \multicolumn{2}{c|}{Sampling} & \multicolumn{2}{c} {Sample} \\  
 & \multicolumn{2}{c|}{fraction} & \multicolumn{2}{c} {size} \\ [3 pt] \hline
Scenario & {$f_c$} & {$f_r$} & {$n_c$} & {$n_r$}\\  \hline
& &  &   \\  [-13 pt]
  & \multicolumn{4}{c}{}\\[-10 pt] 
S1 & 0.01 & 0.01  & 600 & 600  \\
S2 & 0.01 & 0.01  & 100 & 100   \\  
  & \multicolumn{4}{c}{}\\[-10 pt] 
S3 & 0.10 & 0.10  & 600 & 600  \\
S4 & 0.10 & 0.10  & 100 & 100  \\ 
  & \multicolumn{4}{c}{}\\[-10 pt] 
S5 & 0.01 & 0.10  & 100 & 1,000  \\ 
S6 & 0.10 & 0.01  & 1,000 & 100  \\ 
  & \multicolumn{4}{c}{}\\[-10 pt] 
S7 & 0.50 & 1.00  & 500 & 1,000  
\end{tabular} 
\end{table}

\subsection{Simulation results}\label{sec:simsresults}
Table \ref{tab:rmse_summary} reports root mean squared errors (RMSE) for estimators of parameters ${\beta}_{c1}$ and ${\mu}$ for scenarios S1-S6 listed in Table \ref{tab:sim_scenarios}. The top three lines of the table report results for $\hat{\beta}_{c1}$ and the bottom three lines show results for $\hat{\mu}$, for each of the methods. The left half of the table provides results under the high overlap in covariate-defined domains, and the right half of the table shows results under the low overlap.  The RMSEs are computed as 
$\text{RMSE}(\hat{\mu})= \sqrt{{A^{-1}} \sum\nolimits_{\alpha=1}^{A}(\hat{\mu}^{(\alpha)}-\mu)^2 }$ and
$\text{RMSE}(\hat{\beta}_{c1})= \sqrt{{A^{-1}} \sum\nolimits_{\alpha=1}^{A}(\hat{\beta}_{c1}^{(\alpha)}-{\beta_{c1}})^2 }$, and $\hat{\mu}^{(\alpha)}, \hat{\beta}_{c1}^{(\alpha)}$ are estimates of parameters at simulation run $\alpha$ based on respective methods.

The ILR approach provides better results compared to the other methods, while the PILR approach gives better results compared to the CLW for both $\hat{\mu}$ and $\hat{\beta}_{c1}$ in almost every scenario considered. The results are closer for the high overlap cases, and the differences become more pronounced under the low overlap. As expected, for all estimators, RMSEs are generally lower when the sample size is larger. We also note relatively large values of RMSE in the CLW approach under the low overlap cases of scenarios S2, S4, and S6 where the reference sample is small.

\begin{table*}
\centering
\caption{ Root mean square error (RMSE) of estimated parameters ${\beta}_{c1}$ and $\mu$ under high and low overlap for scenarios S1-S6 of Table \ref{tab:sim_scenarios}} 
\label{tab:rmse_summary}
\begin{tabular}{ c | c c c c c c | c c c c c c }
 & \multicolumn{6}{c|}{High overlap} & \multicolumn{6}{c}{Low overlap} \\
 & S1 & S2 & S3 & S4 &  S5 & S6 & S1 & S2 & S3 & S4 &  S5 & S6
 \\  \hline 
  & \multicolumn{12}{c}{} \\ [-10 pt] 

$\hat{\beta}_{ILR}$	   &    0.07	   &    0.17	   &    0.08	   &    0.22	   &    0.12	   &    0.14	   &    0.08	   &    0.22	   &    0.10	   &    0.25	   &    0.14	   &    0.15 \\
$\hat{\beta}_{PILR}$	   &    0.08	   &    0.22	   &    0.08	   &    0.22	   &    0.12	   &    0.16	   &    0.18	   &    0.46	   &    0.13	   &    0.31	   &    0.17	   &    0.26 \\
$\hat{\beta}_{CLW}$	   &    0.09	   &    0.26	   &    0.09	   &    0.29	   &    0.12	   &    0.23	   &    0.22	   &    1.50	   &    0.21	   &    1.22	   &    0.19	   &    0.68 \\

  & \multicolumn{12}{c}{}\\[-10 pt] 

$\hat{\mu}_{ILR}$	   &    0.13	   &    0.33	   &    0.12	   &    0.27	   &    0.30	   &    0.14	   &    0.13	   &    0.32	   &    0.12	   &    0.29	   &    0.32	   &    0.13 \\
$\hat{\mu}_{PILR}$	   &    0.14	   &    0.38	   &    0.12	   &    0.29	   &    0.31	   &    0.17	   &    0.22	   &    0.54	   &    0.15	   &    0.37	   &    0.35	   &    0.25 \\
$\hat{\mu}_{CLW}$	   &    0.15	   &    0.41	   &    0.14	   &    0.33	   &    0.31	   &    0.24	   &    0.26	   &    0.68	   &    0.23	   &    0.64	   &    0.37	   &    0.60 \\

\end{tabular}
\end{table*}


We next examine a more detailed summary for scenarios S5 and S6 of uneven sample sizes between convenience and reference samples in Table \ref{tab:beta_frac_mixed} (for $\beta_{c1}$) and Table \ref{tab:mu_frac_mixed} (for $\mu$). In each MC iteration, we estimated variances by using a plug-in variance estimator based on the asymptotic variance formulas presented in Section \ref{sec:asymvars}. 
Let $\widehat{V}^{(\alpha)}$ denote variance estimate for a parameter ($\mu$ or $\beta_{c1}$) obtained at MC iteration $\alpha$. It can be calculated from expressions \eqref{eq:varmuM} or \eqref{eq:varbetaM} by: 1) plugging-in estimated values of participation probabilities $\hat{\pi}_{ci}$; and 2) replacing  summation over population with weighted sums over non-probability and probability samples. As usual, sampling weights are equal to inverse of estimated participation $\hat{\pi}_{ci}^{-1}$ and selection $\pi_{ri}^{-1}$ probabilities .

\begin{table*}
\centering
\caption{ Estimate of propensity parameter ${\beta}_{c1}$ for sample fractions $\left(f_c,f_r\right) = \left(0.01, 0.1 \right)$  and $\left(0.1, 0.01 \right)$ (scenarios S5 and S6)}
\label{tab:beta_frac_mixed} 
\begin{tabular}{c | c c c c | c c c c}
 
 & \multicolumn{4}{c|}{High overlap} & \multicolumn{4}{c}{Low overlap} \\
 & $Mean$ & $SE$ & $\overline{\widehat{SE}}$ &   $95\%CI$ & 
   $Mean$ & $SE$ & $\overline{\widehat{SE}}$ & $95\%CI$  \\  \hline 
  & \multicolumn{8}{c}{} \\ [-10 pt]
 & \multicolumn{8}{c}{Scenario S5: $n_c=100, n_r=1000$} \\
ILR & 1.00 & 0.12 & 0.11      & 0.95 & 1.01 & 0.14 & 0.13   & 0.95 \\
PILR & 1.00 & 0.12 & 0.12     & 0.95 & 1.03 & 0.16 & 0.15   & 0.94 \\
CLW & 1.01 & 0.12 & 0.12      & 0.95 & 1.05 & 0.19 & 0.16   & 0.93 \\
 & \multicolumn{8}{c}{Scenario S6: $n_c=1000, n_r=100$} \\
ILR & 1.00 & 0.14 & 0.14     &  0.94 & 1.01 & 0.15 & 0.15   & 0.95 \\
PILR & 1.01 & 0.16 & 0.16    &  0.95 & 1.05 & 0.26 & 0.25   & 0.94 \\
CLW & 1.05 & 0.23 & 0.22     &  0.95 & 1.29 & 0.62 & 0.51    & 0.92 \\

\end{tabular}
\end{table*}

\begin{table*}
\centering

\caption{ Estimate of population mean ${\mu}$ for sample fractions $\left(f_c,f_r\right) = \left(0.01, 0.1 \right)$  and $\left(0.1, 0.01 \right)$ (scenarios S5 and S6)}
\label{tab:mu_frac_mixed} 

\begin{tabular}{c | c c  c c | c c  c c}
 
& \multicolumn{4}{c|}{High overlap} & \multicolumn{4}{c}{Low overlap} \\
 & $Mean$ & $SE$ & $\overline{\widehat{SE}}$ &   $95\%CI$ & 
   $Mean$ & $SE$ & $\overline{\widehat{SE}}$ &   $95\%CI$  \\  \hline 
  & \multicolumn{8}{c}{} \\ [-10 pt]
 & \multicolumn{8}{c}{Scenario S5: $n_c=100, n_r=1000$ } \\
ILR & 1.05 & 0.30  & 0.29       & 0.88 & 1.04 & 0.30 & 0.28      & 0.89 \\
PILR & 1.04 & 0.31 & 0.30       & 0.88 & 1.02 & 0.34 & 0.34      & 0.90 \\
CLW & 1.04 & 0.31  & 0.30       & 0.88 & 1.00 & 0.35 & 0.37      & 0.91 \\
 & \multicolumn{8}{c}{Scenario S6: $n_c=1000, n_r=100$} \\
ILR & 1.02 & 0.14  & 0.14       & 0.95 & 1.01 & 0.13 & 0.13      & 0.96 \\
PILR & 1.00 & 0.17 & 0.18       & 0.96 & 0.96 & 0.24 & 0.27      & 0.96 \\
CLW & 0.96 & 0.23  & 0.29       & 0.96 & 0.75 & 0.54 & Inf       & 0.95 \\
\end{tabular}
\end{table*}

The columns in both tables are defined, as follows: 
\begin{itemize}
    \item ``$Mean$'' is the mean estimate of a parameter ($\mu$ or $\beta_{c1}$) over the MC simulations; 
     \item ``$SE$'' is the standard error of the estimates computed over MC simulations, so it is a repeated sampling-derived variance estimator;
     \item ``$\overline{\widehat{SE}}$'' is the average over MC simulations computed based on the plug-in variance estimates $\widehat{V}^{(\alpha)}$  for parameters ($\mu$ or $\beta_{c1}$). 
     We first compute  $\widehat{SE}$ 
     as the square root of the plug-in variance estimate for each MC simulation iteration $\sqrt{\widehat{V}^{(\alpha)}}$. 
     $\widehat{SE}$ is then used to construct $95\%$ confidence intervals and to assess coverage.  
     Finally, we construct $\overline{\widehat{SE}}$ as the average $\overline{\widehat{SE}}=\sqrt{A^{-1}\sum\nolimits_{\alpha=1}^{A}\widehat{V}^{(\alpha)}}$
     over the MC simulation iterations to facilitate the comparison with $SE$.
     \item ``$95\%CI$'' shows the percentage of times, over the MC simulation runs, the true parameter is covered by the $95$ percent normal confidence intervals constructed based on $\sqrt{\widehat{V}^{(\alpha)}}$. 
\end{itemize}
We generally observe relatively high values of $\overline{\widehat{SE}}$ for CLW and to a lesser degree for PILR. So, though the methods may achieve nominal coverage, the resulting average of interval widths over the MC iterations (which are related to $\overline{\widehat{SE}}$) are wider, indicating that CLW and PILR estimators are less efficient than that of ILR.  This relative efficiency difference is particularly pronounced in the low overlap setting.

We note that in the high overlap case in scenario S6, where the reference sample size is relatively small, there is about 7\% bias in the estimates of $\mu$ based on all methods. Respective normal confidence intervals do not provide nominal coverage of the true mean. At the same time, estimates of $\beta_{c1}$ are nearly unbiased and equally efficient for all approaches. 
{In smaller-sized convenience samples nonlinear dependence of an estimator of $\mu$ on  estimated participation probabilities  may result in biased estimates of the mean. Since $\hat{\mu}$ is a consistent estimator, the bias would decrease in larger samples. Bias of estimates for high overlap and small convenience samples also can be seen on boxplots in Figure  \ref{fig:est_mu_0.1-01} of Appendix \ref{sec:SimResultsTblPlots}.}

Lastly, the very similar values of $SE$ and $\overline{\widehat{SE}}$ across the $3$ approaches in Tables  \ref{tab:beta_frac_mixed} and Table \ref{tab:mu_frac_mixed} demonstrate the validity of the theoretically-derived plug-in variance estimator. In some cases in Table \ref{tab:mu_frac_mixed},  estimates of standard error deviate from $SE$ over the MC simulations. We attribute such discrepancies to biased estimates of participation probabilities, i.e. CLW estimate for Scenario S6 under high overlap, or to nonlinear dependence of population mean $\mu$ on $\pi_c$ and small non-probability sample size $n_c$ (ILR estimate for Scenario 5 under low overlap). These cases need to be further studied for better explanation.     

\begin{table}
\centering
\caption{ Estimates of propensity parameter ${\beta}_{c1}$  and population mean ${\mu}$ for  $S_r=U$, non-probability sampling fraction $f_c = 0.5$, population size $N=1,000$ (scenario S7) }
\label{tab:fullpop}
\scalebox{0.95} {
\begin{tabular}{c | c c c  c }    
 & $Mean$ & $SE$ & $\overline{\widehat{SE}}$ &    $95\%CI$ \\ 
 \hline
   &  \multicolumn{4}{c}{}  \\ [-10 pt]
$\hat{\beta}_{ILR/PILR}$ & 1.01 & 0.11 & 0.11        & 0.95  \\
$\hat{\beta}_{CLW}$      & 1.01 & 0.09 & 0.09        & 0.94  \\
  &  \multicolumn{4}{c}{}  \\ [-10 pt]
$\hat{\mu}_{ILR/PILR}$   &  1.00 & 0.07 & 0.07       & 0.94 \\
$\hat{\mu}_{CLW}$        &  1.00 & 0.08 & 0.08       & 0.94 
\end{tabular}
}
\end{table}


Table \ref{tab:fullpop} shows estimates for the ``ultimate overlap'' scenario S7 where the whole finite population is available (and used as the reference sample), so that units of the large convenience sample appear in the stacked set to two samples twice. In this case, sampling fractions and sizes are large and there is also a high overlap in the covariate-defined domains. When the whole population is used as the reference sample, CLW pseudo-likelihood \eqref{eq:chen2} coincides with the population-based likelihood \eqref{eq:chen1}, thus making CLW the optimal method. ILR and PILR are idetical in this case since pseudo-likelihood \eqref{eq:wang3} of the PILR method coincides with the ILR likelihood \eqref{eq:ilr1}. As expected, CLW based estimates of ${\beta}_{c1}$ are more efficient than ILR/PILR based estimates. Interestingly, variances of the estimates of $\mu$ are still smaller for ILR/PILR methods. These observations are also confirmed by the numerical study presented in Figures \ref{fig:counterbeta} and \ref{fig:countermu} of Section \ref{sec:theorstudy}. 

Simulation results verify that variance expressions \eqref{eq:varmuM} and \eqref{eq:varbetaM} work accurately for all considered estimation methods. 

As mentioned in Section \ref{sec:intro}, there is a perception that the existence of the overlapping units in the stacked sample set would prohibit formulation of a likelihood under the independence assumption for the ILR and PILR methods. The closeness of the estimated $\overline{\widehat{SE}}$ and MC simulation based $SE$ numbers verifies formulas \eqref{eq:varmuM} and \eqref{eq:varbetaM} and, indirectly, confirms the validity of the conditional independence assumption.  
At the same time, if overlapping units could be identified, this information could be used to modify the CRISP formula as in \eqref{eq:identityILRwithO} or \eqref{eq:identityILRwithO_Sc}, potentially leading to more efficient estimation.

Additional results for scenarios S1-S4 are reported in Tables \ref{tab:beta_frac_0.01} - \ref{tab:mu_frac_0.1} of Appendix \ref{sec:SimResultsTblPlots}. Boxplots in Figures \ref{fig:est_beta0.01} -  \ref{fig:est_mu_0.1-01} in Appendix \ref{sec:SimResultsTblPlots} show the distribution of relative biases of respective estimated parameters under  scenarios S1-S6.


\section{Concluding remarks}\label{sec:conclusion}

In this paper we reviewed $3$ recent approaches for estimation of a non-probability survey participation probabilities.  We showed that in many practical settings the likelihood based ILR approach provides for more efficient estimates of the model parameters and the IPW estimate of the finite population mean.

The ILR and PILR methods are based on the CRISP formula relating inclusion probabilities of unobserved sample indicators to the probability that can be modeled over an observed stacked samples set. We proved that, conditional on covariates, the indicator variables on the stacked set are nearly independent under the assumption of conditional independence of the samples inclusion indicators. This property allows the use of a likelihood for independent Bernoulli variables formulated over the observed data.  An advantage for the ILR approach that results from this property is that the construction of a valid Bernoulli likelihood on the observed pooled sample allows for application of standard statistical estimation methods. 

We derived asymptotic variances of the estimators and compared theoretical properties of the CLW, ILR, and PILR methods. The formulas revealed that the efficiency of the estimates is impacted by the design-based variance over the repeated probability samples. Small reference samples induce increases in variances of the target parameters in all methods; however, the degree of the effect varies among the methods with the ILR approach being the least susceptible to the effect of a small reference sample size.

The variances are also affected by the degree of the overlap in variable-defined domains between the two samples. Low overlap may cause instability in the estimates even with relatively large reference samples. Any samples combining method depends on the existence of a set of good covariates shared by both samples. Machine learning methods can be useful for the effective modeling. However, ML methods such as classification and regression trees may result in the low-overlap terminal nodes having a small number of units from either sample. The ILR based likelihood could be used as an objective function in these methods, making them more effective.

The methodology is general and can be extended in various ways. At present, for the ILR approach, reference sample inclusion probabilities were assumed to be known for units in both samples. \citet{savitsky2023} co-modeled them along with the non-probability sample participation probabilities and used splines in place of the linear regression model. This approach could be extended to include the outcome variables in the modeling as well. 

\bibliographystyle{imsart-nameyear.bst} 
\bibliography{ref.bib}       


\begin{appendix}
\section{Proof of Theorem \ref{theo:2_2}} \label{sec:ProofT2.2}

Our goal is to find $P\left\{ i,j \in S_c, \mid i,j \in S, i,j \in 2U\right\}$:
\begin{align}\label{eq:Sc_Sij}
\begin{split}
&P\left\{ i,j \in S_c, \mid i,j \in S, i,j \in 2U\right\}=\frac{P\left\{ i,j \in S_c \mid i,j \in 2U\right\}}{P\left\{ i,j \in S \mid i,j \in 2U\right\}}.\\
\end{split}
\end{align}
We can view the ``double population'' $2U$ as consisting of two strata, $U_c$ and $U_r$, each of size $N$. We draw stratified sample $S$ from strata $U_c$ and $U_r$ using Poisson sampling with probabilities $\pi_{ci}$ and $\pi_{ri}$, respectively; $i=1,\dots N$.\\
Consider joint probabilities of units $i$ and $j$ to be included into $S_c$:
\begin{eqnarray} \label{eq:S_cij}
 &&  P\left\{ i,j\in S_c \mid i,j \in 2U\right\} \\
 && =P\left\{ i,j\in S_c \mid i,j \in U_c\right\}P\left\{i,j \in U_c \mid i,j \in 2U\right\} \nonumber \\
&& =\pi_{ci}\pi_{cj}\frac{N(N-1)}{2N(2N-1)}, \nonumber
\end{eqnarray}
Similarly, joint probabilities of units $i$ and $j$ to be included into $S_r$ are 
\begin{eqnarray}
&& P\left\{ i,j\in S_r \mid i,j \in 2U\right\} \\
&& =P\left\{ i,j\in S_r \mid i,j \in U_r\right\} P\left\{i,j \in U_r \mid i,j \in 2U\right\} \nonumber\\
&& =\pi_{ri}\pi_{rj}\frac{N(N-1)}{2N(2N-1)}. \nonumber
\end{eqnarray}
The joint probabilities for units $i$ and $j$ selected from different ``strata'' are
\begin{eqnarray}
&& P\left\{ i\in S_c, j \in S_r \mid i,j \in 2U\right\}\\
&& =P\left\{ i\in S_c,j\in S_r  \mid i \in U_c, j \in U_r\right\} \times \nonumber \\
&& P\left\{ i \in U_c, j \in U_r \mid i,j \in 2U\right\} \nonumber\\
&& =\pi_{ci}\pi_{rj}\frac{N^2}{2N(2N-1)}. \nonumber
\end{eqnarray}
Similarly,
\begin{eqnarray}
&& P\left\{ i\in S_r, j \in S_c \mid i,j \in 2U\right\}\\
&& =\pi_{ri}\pi_{cj}\frac{N^2}{2N(2N-1)}. \nonumber
\end{eqnarray}
The joint probability of selection into sample $S$ is the sum of the above parts: 
\begin{eqnarray} \label{eq:Sij}
&& P\left\{ i,j \in S \mid i,j \in 2U\right\}\\
&& =\frac{N(N-1)}{2N(2N-1)}(\pi_{ci}+\pi_{ri})(\pi_{cj}+\pi_{rj}) \nonumber\\
&& +\frac{N}{2N(2N-1)}(\pi_{ri}\pi_{cj}+\pi_{ci}\pi_{rj}). \nonumber
\end{eqnarray}
Using (\ref{eq:S_cij}) and (\ref{eq:Sij}) in (\ref{eq:Sc_Sij}):
\begin{eqnarray}\label{eq:Jointzij}
&& P\left\{ i,j \in S_c, \mid i,j \in S, i,j \in 2U\right\}\\
&&=\frac{\pi_{ci}\pi_{cj}}{(\pi_{ci}+\pi_{ri})(\pi_{cj}+\pi_{rj})+\frac{1}{N-1}(\pi_{ri}\pi_{cj}+\pi_{ci}\pi_{rj})} \nonumber \\
&& =\pi_{zi}\pi_{zj}F, \nonumber
\end{eqnarray}
where
\begin{eqnarray}
&& F=\frac{1}{1+\frac{1}{N-1}{[(1-\pi_{zi})\pi_{zj}+\pi_{zi}(1-\pi_{zj})]}}\\
&& =1-O(\frac{1}{N-1}). \nonumber
\end{eqnarray}
Similarly,
\begin{align}\label{eq:Cov1-zij}
\begin{split}
&P\left\{ i \in S_c, j \in S_r \mid i,j \in S, i,j \in 2U\right\}\\
&=\frac{N}{N-1}\frac{\pi_{ci}\pi_{rj}}{(\pi_{ci}+\pi_{ri})(\pi_{cj}+\pi_{rj})+\frac{1}{N-1}(\pi_{ri}\pi_{cj}+\pi_{ci}\pi_{rj})}\\
&=\frac{N}{N-1}\pi_{zi}(1-\pi_{zj})F.
\end{split}
\end{align}
The covariance is:
\begin{align}\label{eq:Covzij}
\begin{split}
&Cov(I_{zi},I_{zj})=P\left\{ i,j \in S_c, \mid i,j \in S, i,j \in 2U\right\}\\
&-P\left\{ i \in S_c, \mid i,j \in S, i,j \in 2U\right\}P\left\{ j \in S_c, \mid i,j \in S, i,j \in 2U\right\}.
\end{split}
\end{align}
where, by the definition of conditional probability (dropping $i,j \in 2U$, to simplify notation): 
\begin{align}\label{eq:Indzij}
\begin{split}
&P\left\{ i \in S_c, \mid i,j \in S\right\}=\frac{P\left\{ i \in S_c, j \in S \right\}}{P\left\{ i,j \in S \right\}}\\
&=\frac{P\left\{ i, j \in S_c \right\}+P\left\{ i \in S_c, j \in S_r \right\}}{P\left\{ i,j \in S \right\}},
\end{split}
\end{align}
\begin{align}\label{eq:Indzji}
\begin{split}
&P\left\{ j \in S_c, \mid i,j \in S\right\}=\frac{P\left\{ j \in S_c, i \in S \right\}}{P\left\{ i,j \in S \right\}}\\
&=\frac{P\left\{ i , j \in S_c \right\}+P\left\{ i \in S_r, j \in S_c \right\}}{P\left\{ i,j \in S \right\}}.
\end{split}
\end{align}
Hence, putting everything together,
\begin{eqnarray} \label{eq:Covzz}
&& Cov(I_{zi},I_{zj})=\frac{P\left\{ i,j \in S_c\right\}}{P\left\{ i,j \in S \right\}}\\
&& -\frac{P\left\{ i , j \in S_c \right\}+P\left\{ i \in S_c, j \in S_r \right\}}{P\left\{ i,j \in S \right\}}\times \nonumber \\
&& \frac{P\left\{ i , j \in S_c \right\}+P\left\{ i \in S_r, j \in S_c \right\}}{P\left\{ i,j \in S \right\}} \nonumber \\
&& =\pi_{zi}\pi_{zj}F \nonumber\\
&& -(\pi_{zi}\pi_{zj}F+\frac{N}{N-1}\pi_{zi}(1-\pi_{zj})F) \times \nonumber\\
&& (\pi_{zi}\pi_{zj}F+\frac{N}{N-1}\pi_{zj}(1-\pi_{zi})F) \nonumber \\
&& =\pi_{zi}\pi_{zj}F(1-FG), \nonumber
\end{eqnarray}
where $G=(1+\frac{1-\pi_{zj}}{N-1})(1+\frac{1-\pi_{zi}}{N-1})$.

Note that 
\begin{eqnarray} 
&&FG=F(1+\frac{1-\pi_{zj}}{N-1})(1+\frac{1-\pi_{zi}}{N-1})\\
&&=\frac{1+\frac{1}{N-1}[(1-\pi_{zi})+(1-\pi_{zj})]}{1+\frac{1}{N-1}{[(1-\pi_{zi})\pi_{zj}+\pi_{zi}(1-\pi_{zj})]}}\nonumber\\
&&+\frac{\frac{1}{(N-1)^2}(1-\pi_{zi})(1-\pi_{zj})}{1+\frac{1}{N-1}{[(1-\pi_{zi})\pi_{zj}+\pi_{zi}(1-\pi_{zj})]}}\nonumber\\
&&=1+O\left(\frac{1}{N-1}\right).\nonumber
\end{eqnarray}
Thus,
\begin{eqnarray} 
&& Cov(I_{zi},I_{zj})\\
&& =-\pi_{zi}\pi_{zj}\left[1-O\left(\frac{1}{N-1}\right)\right]O\left(\frac{1}{N-1}\right) \nonumber \\
&& =-O\left(\frac{1}{N}\right). \nonumber
\end{eqnarray}
\\

\section{Proof of Theorem \ref{theo:consist}} \label{sec:ProofT5.1}

Consider vector of parameters $\boldsymbol{\eta^T}=(\mu,\boldsymbol{\beta^T})$. The full set of estimating equations is
\begin{equation}\label{eq:T_EE}
\boldsymbol{\Phi}(\boldsymbol{\eta})=
\begin{pmatrix}
    U(\mu)\\
    S(\boldsymbol{\beta})
\end{pmatrix}
=\boldsymbol{0},
\end{equation}
where $U(\mu)=\mathop{\sum}_{i \in U}I_{ci}\pi_{ci}^{-1}(y_i-\mu)$ is a score function for $\mu$  and  $S(\boldsymbol{\beta})$ are score function for respective models for $\beta.$
Let $\boldsymbol{\eta}_0^T=(\mu_0,\boldsymbol{\beta}_0^T)$ be a vector of true parameters satisfying $E\boldsymbol{\Phi}(\boldsymbol{\eta}_0)=\boldsymbol{0}$ and $\hat{\boldsymbol{\eta}}$ a solution of (\ref{eq:T_EE}). From first-order Taylor expansion of $\boldsymbol{\Phi}(\hat{\boldsymbol{\eta}})$ around $\boldsymbol{\eta}_0$, we have
\begin{align}\label{eq:T_Taylor}
\hat{\boldsymbol{\eta}} - \boldsymbol{\eta}_0\doteq\left[E\{\boldsymbol{\phi}(\boldsymbol{\eta}_0)\}\right]^{-1}\boldsymbol{\Phi}(\boldsymbol{\eta}_0),
\end{align}
where 
\begin{equation}\label{eq:T_phimatrix}
E\{\boldsymbol{\phi}(\boldsymbol{\eta})\}=E\frac{\partial \boldsymbol{\Phi}(\boldsymbol{\eta})}{\partial \boldsymbol{\eta}}=
\begin{pmatrix}
    U_{\mu} & U_{\boldsymbol{\beta}} \\
    \boldsymbol{0} & S_{\boldsymbol{\beta}}
\end{pmatrix}
,
\end{equation}
\begin{align*}
U_{\mu} &= E\frac{\partial U}{\partial \mu}  = -\mathop{\sum}_{i \in U}EI_{ci}\pi_{ci}^{-1}=-N, \\
U_{\boldsymbol{\beta}} &= E\frac{\partial U}{\partial \boldsymbol{\beta^T}}  = \mathop{\sum}_{i \in U}(y_i-\mu)EI_{ci}\frac{\partial }{\partial \boldsymbol{\beta^T}}\pi_{ci}^{-1} \\ 
& =-\mathop{\sum}_{i \in U}(y_i-\mu)(1-\pi_{ci})\mathbf{x}_i^T. 
\end{align*}
From (\ref{eq:T_Taylor}), the approximate variance is
\begin{align}\label{eq:T_sandwich}
Var(\hat{\boldsymbol{\eta}}) \doteq \left[E\{\boldsymbol{\phi}(\boldsymbol{\eta}_0)\}\right]^{-1}Var\{\boldsymbol{\Phi}(\boldsymbol{\eta}_0)\}\left[E\{\boldsymbol{\phi}(\boldsymbol{\eta}_0)\}\right]^{-1},
\end{align}
where the inverse of matrix (\ref{eq:T_phimatrix}) comes out to
\begin{equation}\label{eq:T_invphimatrix}
\left[E\{\boldsymbol{\phi}(\boldsymbol{\eta})\}\right]^{-1}=N^{-1}
\begin{pmatrix}
    -1 & \boldsymbol{b}^T \\
    \boldsymbol{0} & NS_{\boldsymbol{\beta}}^{-1}
\end{pmatrix}
,
\end{equation}
where $\boldsymbol{b}=S_{\boldsymbol{\beta}}^{-1}U_{\boldsymbol{\beta}}^T$ and
\begin{align}\label{eq:T_varphi}
Var[\boldsymbol{\Phi}(\boldsymbol{\eta}_0)]=
\begin{pmatrix}
    Var[U(\mu)] & \boldsymbol{C}^T \\
    \boldsymbol{C} & Var[S(\boldsymbol{\beta})]
\end{pmatrix}
,
\end{align}
where $\boldsymbol{C}=Cov[U(\mu),S(\boldsymbol{\beta})].$

Finally, putting (\ref{eq:T_invphimatrix}) and (\ref{eq:T_varphi}) into (\ref{eq:T_sandwich}), we find that


\begin{align}\label{eq:T_final}
& Var(\hat{\boldsymbol{\eta}}) \doteq \\
& \begin{pmatrix}
    N^{-2}\{Var[U(\mu)]-2\boldsymbol{b}^T\boldsymbol{C}+\boldsymbol{b}^TVar[S(\boldsymbol{\beta})]\boldsymbol{b}\} & \;0 \\
    -N^{-1}S_{\boldsymbol{\beta}}^{-1}[\boldsymbol{C}-Var[S(\boldsymbol{\beta})]\boldsymbol{b}] & \;0
\end{pmatrix} \nonumber \\
&+ \begin{pmatrix}
    0 \; & -N^{-1}[\boldsymbol{C^T}-\boldsymbol{b}^TVar[S(\boldsymbol{\beta})]S_{\boldsymbol{\beta}}^{-1} \\
    0 \; & S_{\boldsymbol{\beta}}^{-1}Var[S(\boldsymbol{\beta})]S_{\boldsymbol{\beta}}^{-1}
\end{pmatrix} \nonumber 
.
\end{align}
where the  design variance of $U(\mu)$ under the Poisson sampling is
\begin{align*}\label{eq:T1_components_var1}
Var\{U(\mu)\}&=\mathop{\sum}_{i \in U}Var(I_{ci})\left(\frac{y_i-\mu}{\pi_{ci}}\right)^2 \\
&=\mathop{\sum}_{i \in U}(1-\pi_{ci})\pi_{ci}^{-1}(y_i-\mu)^2.  \nonumber
\end{align*}

If we present $S(\boldsymbol{\beta})$ as a sum of two independent parts $S(\boldsymbol{\beta})=S_c(\boldsymbol{\beta})+S_r(\boldsymbol{\beta}),$
the variance can be written as
\begin{align*}
Var[S(\boldsymbol{\beta})]&=Var[S_c(\boldsymbol{\beta})]+Var[S_r(\boldsymbol{\beta})]. 
\end{align*}
Let us now spell out (\ref{eq:T_final}) for the ILR and PILR methods.

\textbf{The ILR case:}

The score function is
\begin{align*}
S(\boldsymbol{\beta})&=\mathop{\sum}_{i \in S_c}(1-\pi_{zi})(1-\pi_{ci})\mathbf{x}_i-\mathop{\sum}_{i \in S_r}\pi_{zi}(1-\pi_{ci})\mathbf{x}_i\\
&=\mathop{\sum}_{i \in U}[I_{ci}(1-\pi_{zi})-I_{ri}\pi_{zi}](1-\pi_{ci})\mathbf{x}_i,
\end{align*}
that can be decomposed into:
\begin{align*}
S_c(\boldsymbol{\beta})&=\mathop{\sum}_{i \in U}I_{ci}(1-\pi_{zi})(1-\pi_{ci})\mathbf{x}_i,\\
S_r(\boldsymbol{\beta})&=-\mathop{\sum}_{i \in U}I_{ri}\pi_{zi}(1-\pi_{ci})\mathbf{x}_i.
\end{align*}
Next,
\begin{align*}
\boldsymbol{H}_{ILR} &=-S_{\boldsymbol{\beta}} = -E\frac{\partial S(\boldsymbol{\beta})}{\partial \boldsymbol{\beta}} \\
& =\sum_{i \in U}(\pi_{ci}+\pi_{ri})\pi_{zi}(1-\pi_{zi})(1-\pi_{ci})^2\mathbf{x}_i\mathbf{x}_i^T. 
\end{align*}
The variance of $S_c(\boldsymbol{\beta})$ is
\begin{align*}
& \boldsymbol{A}_{ILR} = Var[S_c(\boldsymbol{\beta})] \\ 
&=\mathop{\sum}_{i \in U}\pi_{ci}(1-\pi_{ci})(1-\pi_{zi})^2(1-\pi_{ci})^2\mathbf{x}_i\mathbf{x}_i^T \nonumber \\
&=\boldsymbol{H}_{ILR}-\sum_{i \in U}(\pi_{ri}+1)\pi_{ci}\pi_{zi}(1-\pi_{zi})(1-\pi_{ci})^2\mathbf{x}_i\mathbf{x}_i^T 
\end{align*}
The variance of $S_r(\boldsymbol{\beta})$ is $\boldsymbol{D}_{ILR}=Var_d[\mathop{\sum}_{i \in U}I_{ri}\pi_{zi}(1-\pi_{ci})\mathbf{x}_i]$ is the variance-covariance matrix under the probability sample design.

For the covariance, using the independence of indicators $I_{ci}$ and $I_{ri}$, we have
\begin{align*}
& \boldsymbol{C}_{ILR}  = Cov[U(\mu),S(\boldsymbol{\beta})] \\ 
&= \mathop{\sum}_{i \in U}Cov(I_{ci},I_{ci})\pi_{ci}^{-1}(y_i-\mu)(1-\pi_{zi})(1-\pi_{ci})\mathbf{x}_i \nonumber \\
&= \mathop{\sum}_{i \in U}(y_i-\mu)(1-\pi_{zi})(1-\pi_{ci})^2\mathbf{x}_i. 
\end{align*}

Variance formulas for the parameters in the ILR method follow by putting together the above formulas.

It is easy to see, using regularity condition C2, that 
\begin{align*}
Var({U\mu})&= N^2O\left(\frac{1}{n_c}\right),\\
\boldsymbol{b}_{ILR}^T\boldsymbol{C}_{ILR}&=N^2O\left(\frac{n_c+n_r}{n_cn_r}\frac{n_r}{n_c+n_r}\right),
\end{align*}
\begin{align*}
& \boldsymbol{b}_{ILR}^T(\boldsymbol{A}_{ILR}+\boldsymbol{D}_{ILR})\boldsymbol{b}_{ILR} \\
&=N^2O\left(\frac{(n_c+n_r)^2}{n_c^2n_r^2}\left\{\frac{n_cn_r^2}{(n_c+n_r)^2}+\frac{n_c^2n_r}{(n_c+n_r)^2}\right\}\right),\\ 
&\boldsymbol{H}_{ILR}^{-1}\left(\boldsymbol{A}_{ILR}+\boldsymbol{D}_{ILR}\right)\boldsymbol{H}_{ILR} \\
&=O\left(\frac{(n_c+n_r)^2}{n_c^2n_r^2}\left\{\frac{n_cn_r^2}{(n_c+n_r)^2}+\frac{n_c^2n_r}{(n_c+n_r)^2}\right\}\right).
\end{align*}

After putting it into the variance formulas, we have
\begin{align*}
  Var(\hat{\mu}_{ILR})& =  O \Big( \frac{1}{n_c}+\frac{1}{n_r} \Big), \\
  Var(\boldsymbol{\hat{\beta}}_{ILR})& =O\Big( \frac{1}{n_c}+\frac{1}{n_r}\Big).
\end{align*}
Hence, $\hat{\mu}_{ILR}$ and $\boldsymbol{\hat{\beta}}_{ILR}$ are consistent estimates of, respectively, $\mu$ and $\boldsymbol{\beta}$ and $\hat{\mu}_{ILR}-\mu=O_p(\text{min}(n_c,n_r)^{-1/2})$ and $\boldsymbol{\hat{\beta}}_{ILR}-\boldsymbol{{\beta}}=O_p(\text{min}(n_c,n_r)^{-1/2})$.

\textbf{The PILR case:}

The score function is
\begin{align*} 
    \begin{split}
    S(\boldsymbol{\beta})&=\mathop{\sum}_{i \in S_c}(1-\pi_{{\delta}i})(1-\pi_{ci})\mathbf{x}_i  \\
    &- \mathop{\sum}_{i \in S_r}w_{ri}\pi_{{\delta}i}(1-\pi_{ci})\mathbf{x}_i \\
    \end{split}\\
    &=\mathop{\sum}_{i \in U}\left\{I_{ci}(1-\pi_{{\delta}i})- I_{ri}w_{ri}\pi_{{\delta}i}\right\}(1-\pi_{ci})\mathbf{x}_i,
\end{align*}
that can be decomposed into:
\begin{align*}
    S_c({\boldsymbol{\beta}})&=\mathop{\sum}_{i \in U}(I_{ci}(1-\pi_{{\delta}i})-\pi_{{\delta}i})(1-\pi_{ci})\mathbf{x}_i, \\
    S_r({\boldsymbol{\beta}})&=\mathop{\sum}_{i \in U}(1-I_{ri}w_{ri})\pi_{{\delta}i}(1-\pi_{ci})\mathbf{x}_i. 
\end{align*}
Next,
  \begin{align*} 
\boldsymbol{H}_{PILR} &=-S_{\boldsymbol{\beta}} = -E\frac{\partial S(\boldsymbol{\beta})}{\partial \boldsymbol{\beta}} \\ 
& =\sum_{i \in U}(\pi_{ci}+1)\pi_{{\delta}i}(1-\pi_{{\delta}i})(1-\pi_{ci})^2\mathbf{x}_i\mathbf{x}_i^T. \nonumber 
\end{align*}
The variance of $S_c(\boldsymbol{\beta})$ is
\begin{align*}
\boldsymbol{A}_{PILR} &= Var[S_c(\boldsymbol{\beta})]\\
&=\mathop{\sum}_{i \in U}\pi_{ci}(1-\pi_{{\delta}i})(1-\pi_{ci})^2\mathbf{x}_i\mathbf{x}_i^T\nonumber \\
&-2\mathop{\sum}_{i \in U}\pi_{ci}\pi_{{\delta}i}(1-\pi_{{\delta}i})(1-\pi_{ci})^2\mathbf{x}_i\mathbf{x}_i^T  \nonumber\\
&=\boldsymbol{H}_{PILR}-2\mathop{\sum}_{i \in U}\pi_{ci}\pi_{{\delta}i}(1-\pi_{{\delta}i})(1-\pi_{ci})^2\mathbf{x}_i\mathbf{x}_i^T 
\end{align*}
The variance of $S_r(\boldsymbol{\beta})$ is 
\\ $\boldsymbol{D}_{PILR}=Var_d[\mathop{\sum}_{i \in U}I_{ri}w_{ri}\pi_{{\delta}i}(1-\pi_{ci})\mathbf{x}_i]$,  the variance-covariance matrix under the probability sample design.

For the covariance, using the independence of indicators $I_{ci}$ and $I_{ri}$, we have
\begin{align*}
& \boldsymbol{C}_{PILR}  =Cov[U(\mu),S(\boldsymbol{\beta})]  \\ 
 & = \mathop{\sum}_{i \in U}Cov(I_{ci},I_{ci})\pi_{ci}^{-1}(y_i-\mu)(1-\pi_{{\delta}i})(1-\pi_{ci})\mathbf{x}_i \nonumber \\
&= \mathop{\sum}_{i \in U}(y_i-\mu)(1-\pi_{{\delta}i})(1-\pi_{ci})^2\mathbf{x}_i.
\end{align*}

Variance formulas for the parameters in the PILR method follow by putting together the above formulas.

We can see, using regularity condition C2, that  
\begin{align*}
Var({U\mu})&= N^2O\left(\frac{1}{n_c}\right),\\
\boldsymbol{b}_{PILR}^T\boldsymbol{C}_{PILR}&=N^2O\left({\frac{1}{n_c}}\right),\\
\boldsymbol{b}_{PILR}^T(\boldsymbol{A}_{PILR}+\boldsymbol{D}_{PILR})\boldsymbol{b}_{PILR}&=N^2O\Big[({\frac{1}{n_c^2}})(n_c+\frac{n_c^2}{n_r})\Big],\\ 
\boldsymbol{H}_{PILR}^{-1}(\boldsymbol{A}_{PILR}+\boldsymbol{D}_{PILR})\boldsymbol{H}_{PILR}&=O\Big[\frac{1}{n_c^2}(n_c+\frac{n_c^{2}}{n_r})\Big].
\end{align*}
After putting it into the variance formulas, we have
\begin{align*}
  Var(\hat{\mu}_{PILR})& =  O\Big (\frac{1}{n_c}+\frac{1}{n_r} \Big), \\
  Var(\boldsymbol{\hat{\beta}}_{PILR})& =O\Big ( \frac{1}{n_c}+\frac{1}{n_r} \Big).
\end{align*}
Hence, $\hat{\mu}_{ILR}$ and $\boldsymbol{\hat{\beta}}_{ILR}$ are consistent estimates of, respectively, $\mu$ and $\boldsymbol{\beta}$ and $\hat{\mu}_{ILR}-\mu=O_p(\text{min}(n_c,n_r)^{-1/2})$ and $\boldsymbol{\hat{\beta}}_{ILR}-\boldsymbol{{\beta}}=O_p(\text{min}(n_c,n_r)^{-1/2})$.

\section{Simulation Results: Tables and plots } \label{sec:SimResultsTblPlots}

The Tables and Boxplots of this Section show the bulk of simulations results discussed in the main part of the paper.  
Additional results for scenarios S1-S4 are reported  in Tables \ref{tab:beta_frac_0.01} - \ref{tab:mu_frac_0.1}. 
Boxplots in Figures \ref{fig:est_beta0.01} -  \ref{fig:est_mu_0.1-01} show the distribution of relative biases of respective estimated parameters under  scenarios S1-S6. 

Under the high overlap and relatively large size of the reference sample (upper-left quadrant in each Table), estimates perform similarly well for all methods: estimates of both $\beta_{c1}$ and $\mu$ are nearly unbiased and variance estimates perform well providing for the nominal coverage by the normal confidence intervals. The only exception to this satisfactory performance are the estimates of ${\mu}$ in Table \ref{tab:mu_frac_mixed} that are positively biased by about $7\%$ for all the methods. Corresponding estimates for ${\beta}_{c1}$ (Table \ref{tab:beta_frac_mixed}) are essentially unbiased.

Results in the lower-left quadrants of the Tables and boxplots correspond to reference samples of smaller size $n_r=100$. Table \ref{tab:beta_frac_0.01} presents results for small sampling fractions $f_c=f_r=0.01$. It shows that ILR estimates of ${\beta}_{c1}$ are almost unbiased and about $30\%$ more efficient than the CLW based estimates. Also, the CLW based estimates are biased by about $7\%$. PILR estimates are of intermediate quality. Estimates corresponding to a larger sampling fractions $f_c=f_r=0.10$ in the lower-left quadrant of Table \ref{tab:beta_frac_0.1} show significantly smaller variation between different estimation methods. 
Numbers in the upper-right quadrants of these Tables 
correspond to relatively large reference samples and low overlap in covariate-defined domains. For small sampling fractions of $0.01$, ILR estimates of ${\beta}_{c1}$ have a significantly smaller bias and variance compared to the estimates from the CLW and PILR methods. This bias in the estimate of $\beta_{c1}$ also affects the variance estimates for the PILR and CLW methods, while variance estimates for the ILR perform satisfactory. For the larger sampling fraction of $0.10$, estimates based on the PILR and ILR methods are comparable and better than the CLW estimates. A similar pattern is observed in 
 the estimates of population mean ${\mu}$. 

Results in the lower-right quadrants of the Tables correspond to small reference samples sizes and low overlap in covariates domains between the non-probability and reference samples. For smaller sampling fractions of the non-probability sample,
estimates by the ILR method have significantly smaller bias and variance comparing to estimates by other methods. For a larger sampling fractions of $0.10$, PILR and ILR estimates in Tables \ref{tab:beta_frac_0.1} and \ref{tab:mu_frac_0.1} are comparable and significantly better than CLW estimates. Variances of ILR estimates are accurately estimated and produce close to nominal coverage of population parameters by the confidence intervals; 
PILR variances are reasonably accurate as well. However, for the CLW method, large biases of point estimates result in very inaccurate variance estimates.

\begin{table*}
\centering
\caption{ Estimate of propensity parameter ${\beta}_{c1}$ for sample fractions $f_c=f_r=0.01$ } 
\label{tab:beta_frac_0.01}
\begin{tabular}{c | c c  c c | c c  c c}
 
 & \multicolumn{4}{c|}{High overlap} & \multicolumn{4}{c}{Low overlap} \\
 & $Mean$ & $SE$ & $\overline{\widehat{SE}}$   &  $95\%CI$ & 
   $Mean$ & $SE$ & $\overline{\widehat{SE}}$   & $95\%CI$  \\  \hline 
  & \multicolumn{8}{c}{} \\ [-10 pt] 
 & \multicolumn{8}{c}{Scenario S1: $n_c=600, n_r=600$} \\
ILR & 1.01  & 0.07 & 0.07   & 0.95 & 1.01 & 0.08 & 0.08   & 0.96 \\
PILR & 1.01 & 0.08 & 0.08   & 0.95 & 1.05 & 0.18 & 0.14   & 0.88 \\
CLW & 1.01  & 0.09 & 0.09   & 0.94 & 1.07 & 0.21 & 0.16   & 0.86 \\
 & \multicolumn{8}{c}{Scenario S2: $n_c=100, n_r=100$} \\
ILR & 1.01  & 0.17 & 0.17   & 0.95 & 1.04 & 0.22 & 0.21   & 0.95 \\
PILR & 1.04 & 0.22 & 0.20   & 0.94 & 1.22 & 0.40 & 0.32   & 0.87 \\
CLW & 1.07  & 0.25 & 0.22   & 0.93 & 1.44 & 1.44 & 5.88   & 0.86 \\

\end{tabular}

\end{table*}


\begin{table*}
\centering
\caption{ Estimate of population mean  ${\mu}$ for sample fractions $f_c=f_r=0.01$ } 
\label{tab:mu_frac_0.01}

\begin{tabular}{c | c c  c c | c c  c c}
 
 & \multicolumn{4}{c|}{High overlap} & \multicolumn{4}{c}{Low overlap} \\
 & $Mean$ & $SE$ & $\overline{\widehat{SE}}$  &  $95\%CI$ & 
   $Mean$ & $SE$ & $\overline{\widehat{SE}}$  & $95\%CI$  \\  \hline 
  & \multicolumn{8}{c}{} \\ [-10 pt]
 & \multicolumn{8}{c}{Scenario S1: $n_c=600, n_r=600$} \\
ILR & 1.00  & 0.13 & 0.13   & 0.93 & 1.00 & 0.13 & 0.13   & 0.93 \\
PILR & 1.00 & 0.14 & 0.15   & 0.93 & 0.95 & 0.22 & 0.21   & 0.94 \\
CLW & 0.99  & 0.15 & 0.15   & 0.94 & 0.93 & 0.25 & 0.24   & 0.95 \\
 & \multicolumn{8}{c}{Scenario S2: $n_c=100, n_r=100$} \\
ILR & 1.05  & 0.32 & 0.30   & 0.89 & 1.03 & 0.32 & 0.31   & 0.91 \\
PILR & 1.01 & 0.38 & 0.42   & 0.91 & 0.87 & 0.52 & 1.03   & 0.93 \\
CLW & 0.99  & 0.41 & 0.53   & 0.91 & 0.77 & 0.64 & Inf    & 0.95 \\

\end{tabular}

\end{table*}


\begin{table*}
\centering
\caption{ Estimate of propensity parameter ${\beta}_{c1}$ for sample fractions $f_c=f_r=0.1$ }
\label{tab:beta_frac_0.1}
\begin{tabular}{c | c c  c c | c c  c c}
 
& \multicolumn{4}{c|}{High overlap} & \multicolumn{4}{c}{Low overlap} \\
 & $Mean$ & $SE$ & $\overline{\widehat{SE}}$   &  $95\%CI$ & 
   $Mean$ & $SE$ & $\overline{\widehat{SE}}$   & $95\%CI$  \\  \hline 
  & \multicolumn{8}{c}{} \\ [-10 pt]
 & \multicolumn{8}{c}{Scenario S3: $n_c=600, n_r=600$} \\
ILR & 1.00  & 0.08 & 0.08   & 0.94 & 1.00 & 0.10 & 0.09   & 0.94 \\
PILR & 1.00 & 0.08 & 0.08   & 0.95 & 1.01 & 0.13 & 0.12   & 0.93 \\
CLW & 1.01  & 0.09 & 0.10   & 0.95 & 1.05 & 0.20 & 0.17   & 0.91 \\
 & \multicolumn{8}{c}{Scenario S4: $n_c=100, n_r=100$} \\
ILR & 1.04  & 0.22 & 0.20   & 0.95 & 1.04 & 0.25 & 0.24   & 0.94 \\
PILR & 1.04 & 0.22 & 0.21   & 0.95 & 1.05 & 0.30 & 0.29   & 0.94 \\
CLW & 1.07  & 0.28 & 0.27   & 0.96 & 1.32 & 1.18 & 7.83   & 0.93 \\

\end{tabular}

\end{table*}


\begin{table*}
\centering
\caption{ Estimate of population mean  ${\mu}$ for sample fractions $f_c=f_r=0.1$ } 
\label{tab:mu_frac_0.1}

\begin{tabular}{c | c c  c c | c c  c c}
 
 & \multicolumn{4}{c|}{High overlap} & \multicolumn{4}{c}{Low overlap} \\
 & $Mean$ & $SE$ & $\overline{\widehat{SE}}$   &  $95\%CI$ & 
   $Mean$ & $SE$ & $\overline{\widehat{SE}}$   & $95\%CI$  \\  \hline 
  & \multicolumn{8}{c}{} \\ [-10 pt]
 & \multicolumn{8}{c}{Scenario S3: $n_c=600, n_r=600$} \\
ILR & 1.02  & 0.12 & 0.11   & 0.91 & 0.97 & 0.12 & 0.11   & 0.93 \\
PILR & 1.01 & 0.12 & 0.12   & 0.91 & 0.96 & 0.15 & 0.14   & 0.95 \\
CLW & 1.01  & 0.14 & 0.14   & 0.91 & 0.92 & 0.21 & 0.21   & 0.95 \\
 & \multicolumn{8}{c}{Scenario S4: $n_c=100, n_r=100$} \\
ILR & 1.03  & 0.27 & 0.26   & 0.90 & 1.00 & 0.29 & 0.25   & 0.89 \\
PILR & 1.02 & 0.29 & 0.31   & 0.90 & 0.97 & 0.37 & 0.37   & 0.91 \\
CLW & 1.00  & 0.33 & 0.41   & 0.91 & 0.81 & 0.61 & Inf    & 0.93 \\

\end{tabular}

\end{table*}


\begin{figure}
    \centering
    \includegraphics[width=1\linewidth]{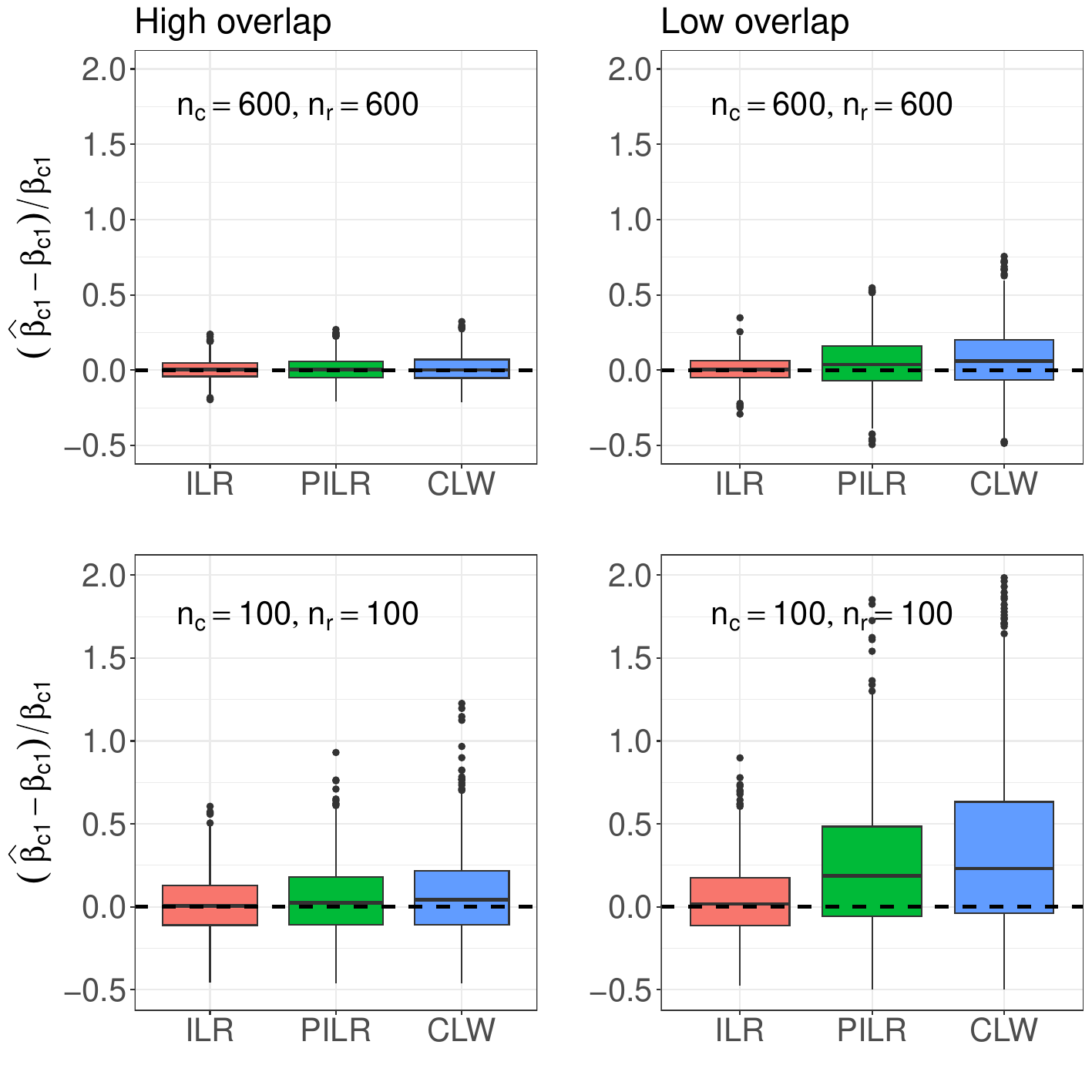}
    \caption{Relative bias of the estimated propensity parameters ${\beta}_{c1}$ for sample fraction 0.01 over the MC simulations, scenarios S1 and S2}
    \label{fig:est_beta0.01}
\end{figure}


\begin{figure}
    \centering
    \includegraphics[width=1\linewidth]{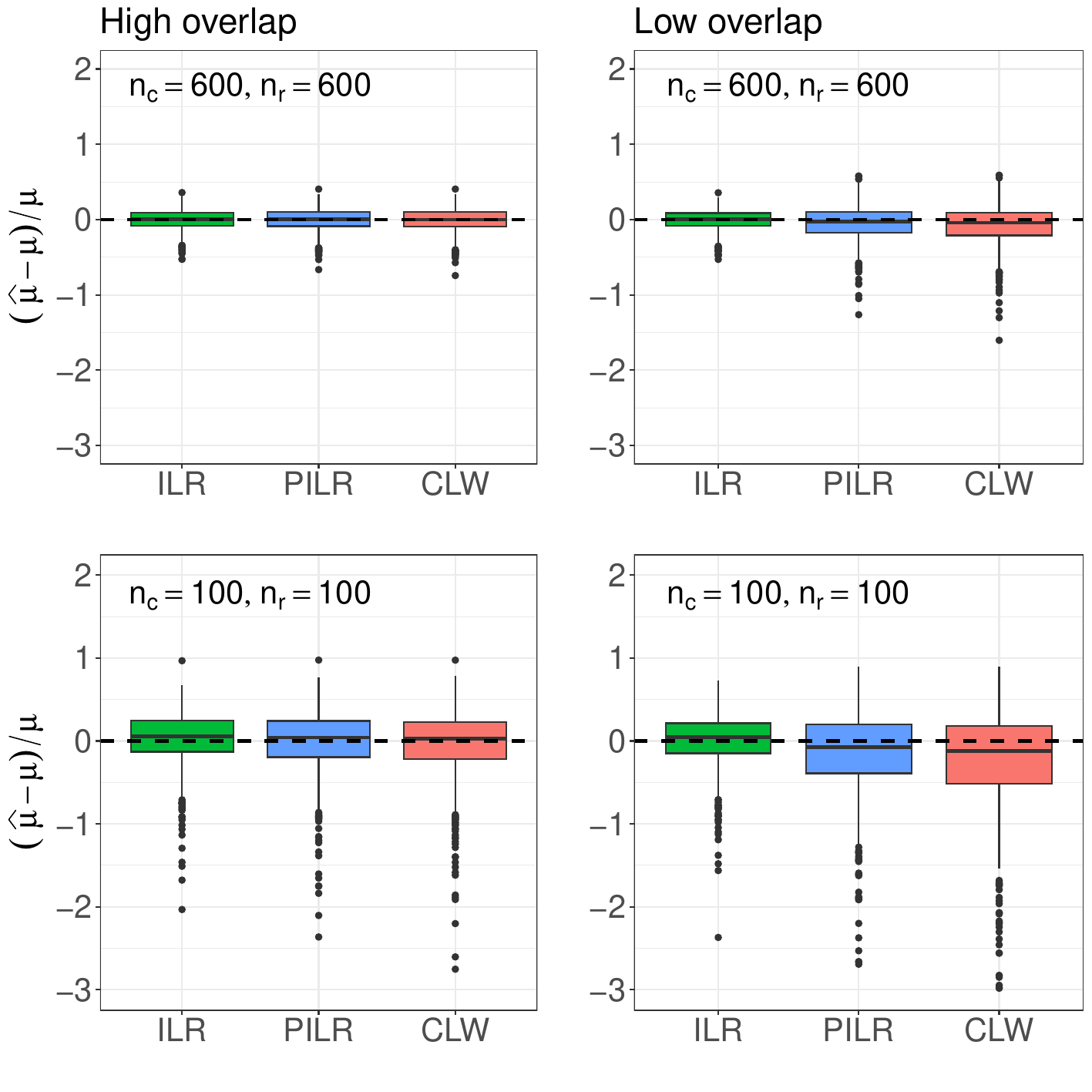}
    \caption{Relative bias of the estimated population mean  ${\mu}$ for sample fraction 0.01 over the MC simulations, scenarios S1 and S2}
    \label{fig:est_mu_0.01}
\end{figure}


\begin{figure}
    \centering
    \includegraphics[width=1\linewidth]{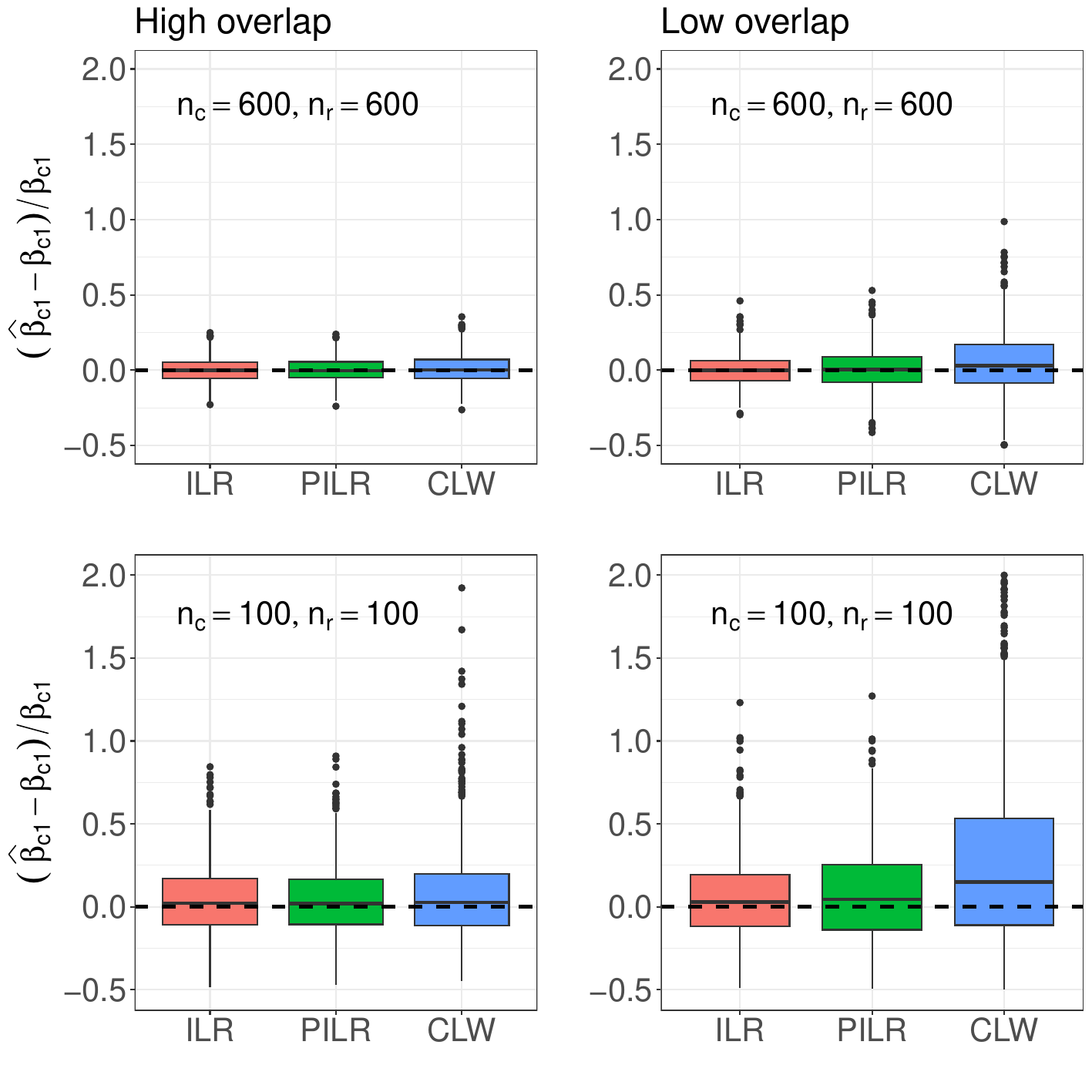}
    \caption{Relative bias of the estimated propensity parameters ${\beta}_{c1}$  for sample fraction 0.1 over the MC simulations, scenarios S3 and S4}
    \label{fig:est_beta_0.1}
\end{figure}


\begin{figure}
    \centering
    \includegraphics[width=1\linewidth]{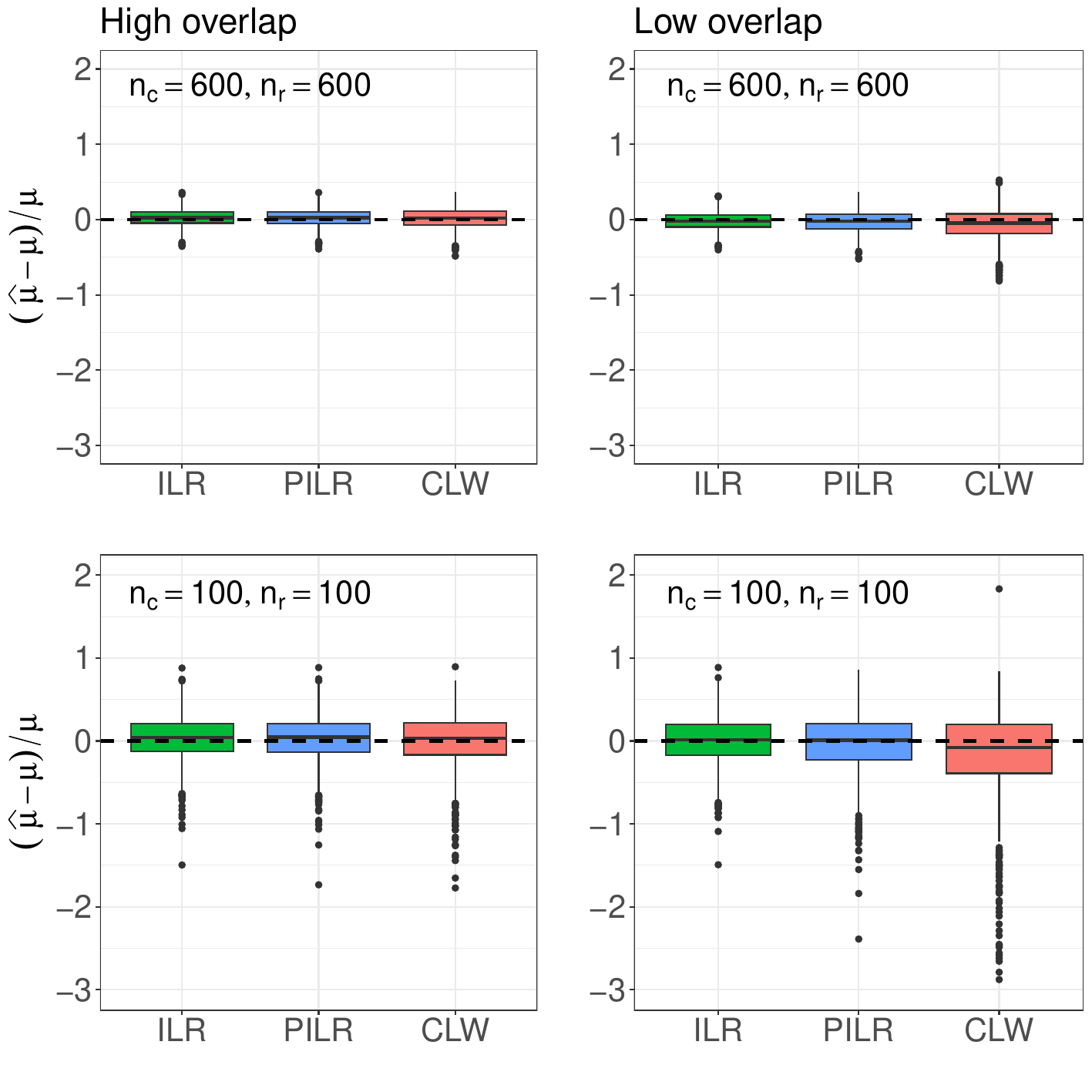}
    \caption{Relative bias of the estimated population mean  ${\mu}$ for sample fraction 0.1 over the MC simulations, scenarios S3 and S4}
    \label{fig:est_mu_0.1}
\end{figure}


\begin{figure}
    \centering
    \includegraphics[width=1\linewidth]{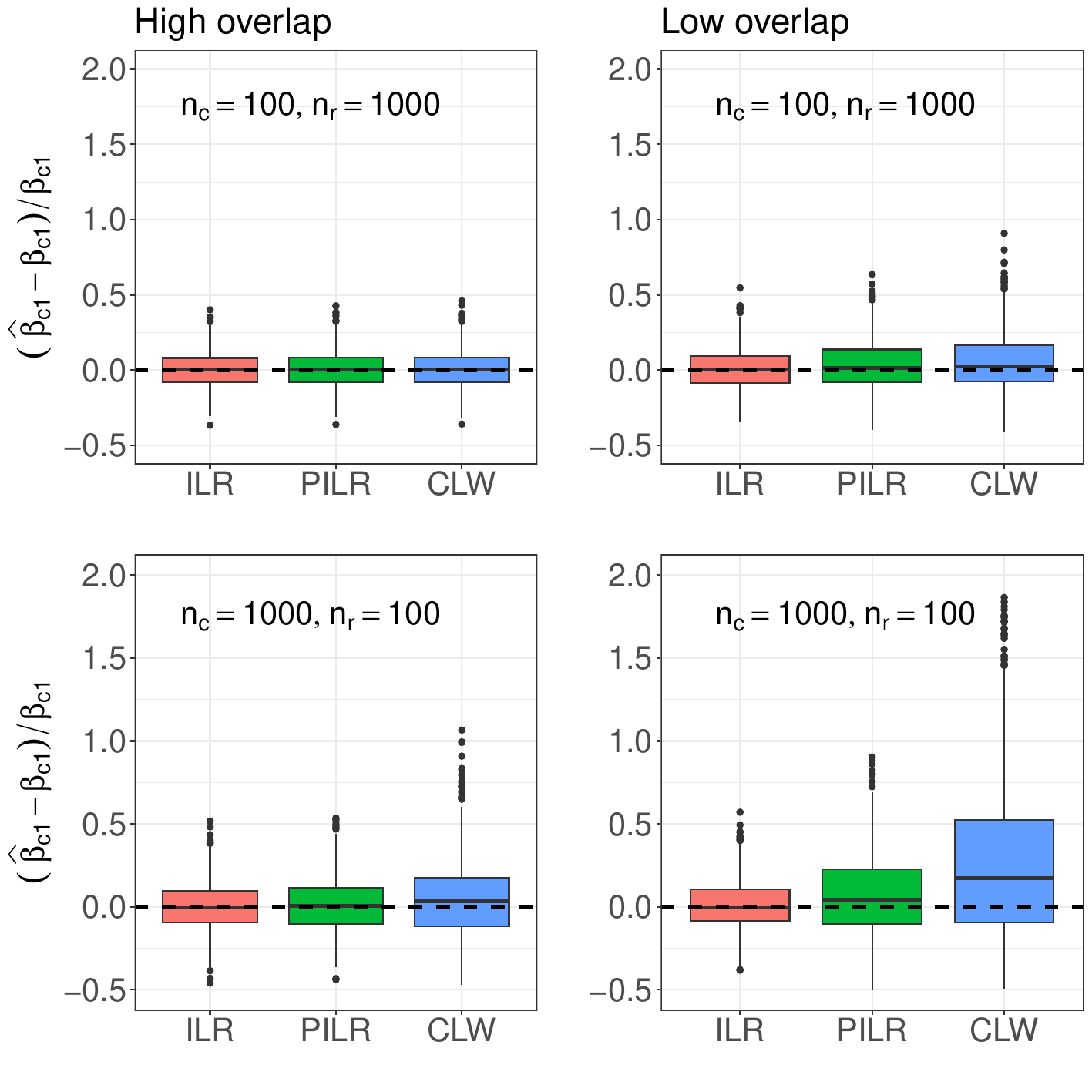}
    \caption{Relative bias of the estimated propensity parameters ${\beta}_{c1}$  for sampling fractions $\left(f_c,f_r\right)$  equal to $\left(0.01, 0.1 \right)$ (upper row) and to $\left(0.1, 0.01 \right)$ (lower row) over the MC simulations, scenarios S5 and S6}
    \label{fig:est_beta_0.1-01}
\end{figure}


\begin{figure}
    \centering
    \includegraphics[width=1\linewidth]{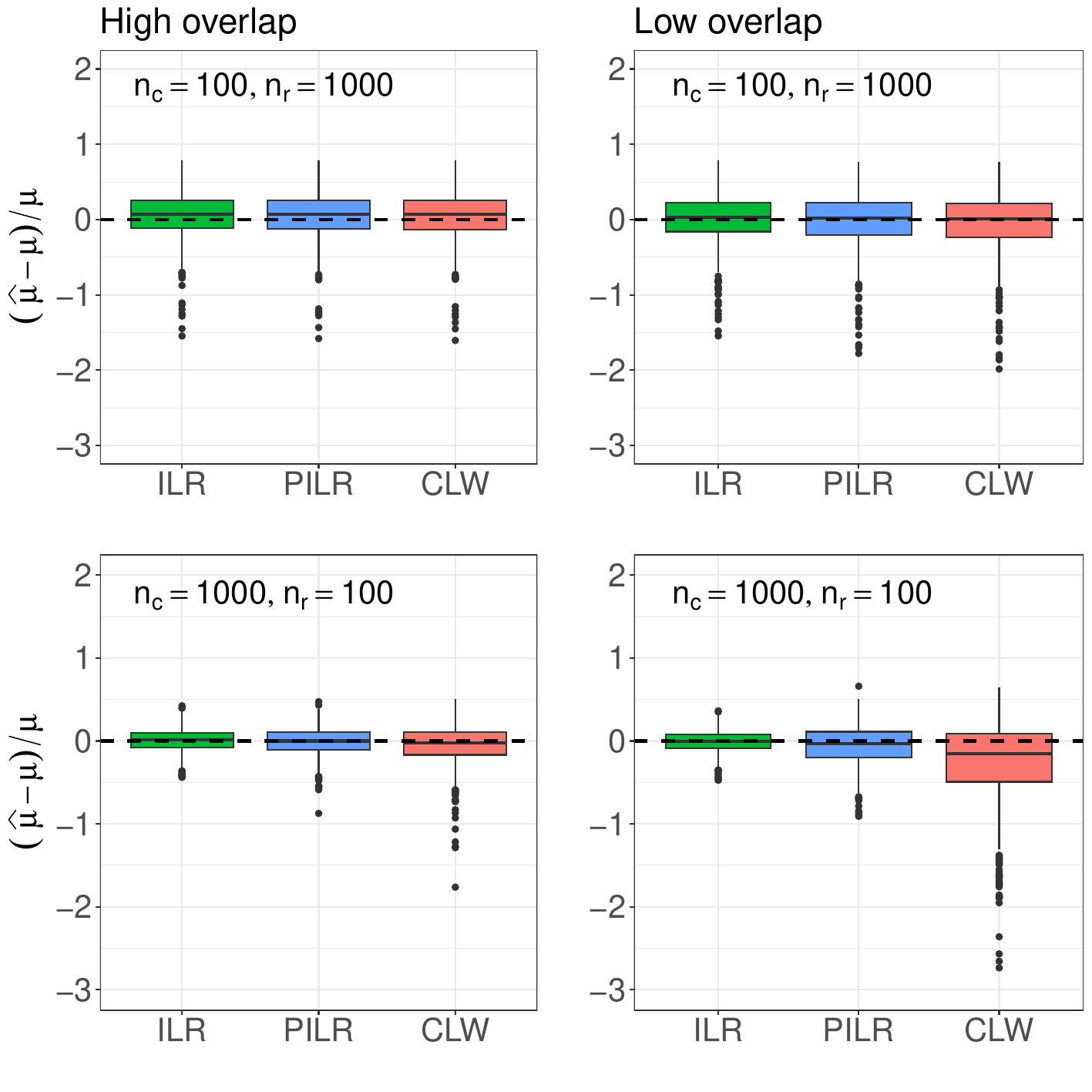}
    \caption{Relative bias of the estimated population mean  ${\mu}$ for sampling fractions $\left(f_c,f_r\right)$  equal to $\left(0.01, 0.1 \right)$ (upper row) and to $\left(0.1, 0.01 \right)$ (lower row) over the MC simulations, scenarios S5 and S6}
    \label{fig:est_mu_0.1-01}
\end{figure}


\section{On a bias in the two-step estimation of response propensity} \label{sec:step1_vs_step2}
In Section \ref{sec:PILR}, the PILR approach is presented as a one-step alternative to the ALP weighting method of \citet{2021valliant}. The methods use identical pseudo-likelihoods  \eqref{eq:wang2} and \eqref{eq:wang3} and similar relationship \eqref{eq:identityWVL} between $\pi_{\delta}$ and participation probability $\pi_c$. The methods differ in that ALP estimates participation probability $\pi_c$ in two steps while  PILR estimates $\pi_c$ in one step. 

In Figure \ref{fig:step1vs2}, we show predicted vs true $\pi_c$ for the entire population using three one-step estimation methods presented in this paper, namely CLW, ILR and PILR, and the two-step ALP method by \citet{2021valliant} for simulation scenarios S5 and S6 described in Table \ref{tab:sim_scenarios} and for low and high overlaps in covariate domains. 

\begin{figure}
    \centering
    \includegraphics[width=1\linewidth]{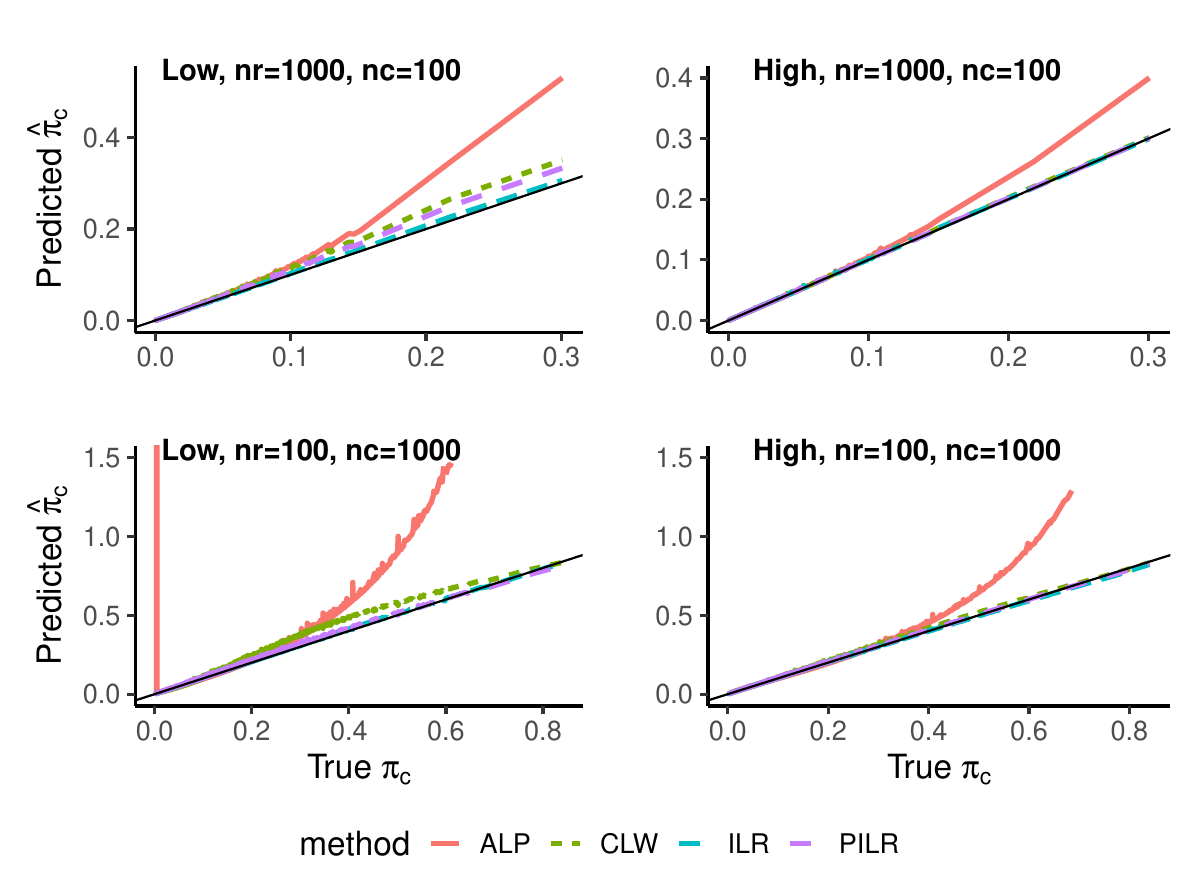}
    \caption{Predicted vs true participation probability $\pi_c$ obtained using one- and two-step estimation methods. Scenarios S5 and S6 of Table \ref{tab:sim_scenarios} for Low and High overlaps in covariate domains}
    \label{fig:step1vs2}
\end{figure}

All methods are approximately unbiased for small values of true $\pi_c$, while the two-step ALP method systematically overestimates participation probabilities of population units with large true $\pi_c$. 

At the same time, ALP produced estimates of the finite population mean \eqref{eq:hajek} similar to the PILR estimates shown in Table \ref{tab:mu_frac_mixed}. This could be a feature of this particular simulation setup. In general, biased estimates of participation probabilities may result in biased estimates of the population mean from non-probability samples.     

\end{appendix}

\end{document}